\newtheorem{theorem}{Theorem}
\newtheorem{lemma}[theorem]{Lemma}
\newtheorem{corollary}[theorem]{Corollary}
\newcommand{\beq}{\begin{eqnarray}}
\newcommand{\eeq}{\end{eqnarray}}
\newcommand{\ket}[1]{|#1\rangle}
\newcommand{\bra}[1]{\langle#1|}
\newcommand{\Tr}{\mbox{\rm Tr}}
\newcommand{\Id}{\ensuremath{\mathop{\rm Id}\nolimits}}
\newcommand{\Es}[1]{\textsc{E}_{#1}}
\newcommand{\C}{\ensuremath{\mathbb{C}}}
\newcommand{\R}{\ensuremath{\mathbb{R}}}
\newcommand{\mH}{\mathcal{H}}
\newcommand{\reg}[1]{{\textsf{#1}}}
\def\ancilla {{\mathrm{extra}}}
\newcommand{\setft}[1]{\mathrm{#1}}
\newcommand{\Lin}{\setft{L}}
\DeclareMathOperator{\poly}{poly}
\newcommand{\eps}{\varepsilon}
\newcommand{\CHSH}{\ensuremath{\textsc{CHSH}}}
\newif\ifnotes\notesfalse
\begin{document}

\title{Entanglement of approximate quantum strategies in XOR games}
\author{Dimiter Ostrev\thanks{Department of Mathematics, Massachusetts Institute of Technology, USA.} \and Thomas Vidick\thanks{California Institute of Technology, Pasadena, USA. Research supported by NSF CAREER Grant CCF-1553477 and the IQIM, an NSF Physics Frontiers Center (NFS Grant PHY-1125565) with support of the
Gordon and Betty Moore Foundation (GBMF-12500028).}}
\date{}
\maketitle

\begin{abstract}
We show that for any $\eps>0$ there is an XOR game $G=G(\eps)$ with $\Theta(\eps^{-1/5})$ inputs for one player and $\Theta(\eps^{-2/5})$ inputs for the other player such that $\Omega(\eps^{-1/5})$ ebits are required for any strategy achieving bias that is at least a multiplicative factor $(1-\eps)$ from optimal. This gives an exponential improvement in both the number of inputs or outputs and the noise tolerance of any previously-known self-test for highly entangled states. 
Up to the exponent $-1/5$ the scaling of our bound with $\eps$ is tight: for any XOR game there is an $\eps$-optimal strategy using $\lceil \eps^{-1} \rceil$ ebits, irrespective of the number of questions in the game. 
\end{abstract}

\section{Introduction}

Perhaps the most striking demonstration of the radical departure of quantum systems from classical behavior is given by the Bell test. Recent experiments~\cite{hensen2015loophole,giustina2015significant,shalm2015strong} establish ``all-loopholes-closed'' validations of the simplest such test, the CHSH inequality~\cite{Clauser:69a}. Although they do not reach the maximum quantum bound of $2\sqrt{2}$, the observed violation and statistical confidence are high enough to provide a solid proof of quantumness of the underlying physical system. 

Research in quantum cryptography and self-testing in recent years has established that a  large violation of the CHSH inequality goes much further than a generic certificate of non-classical behavior: it can serve as a guarantee that the underlying quantum system is locally isometric to one that is in a Bell pair $\ket{\phi^+}=\frac{1}{\sqrt{2}}(\ket{00}+\ket{11})$. This can be interpreted as a form of ``self-test'' for the Bell pair, by which its presence is certified solely via observable correlations, irrespective of the measurements being made. 

Can more complex entangled states similarly be verified by the violation of a suitable Bell inequality? Due to its importance for experiments as well as quantum cryptography, the question has been well-studied. The most relevant state of the art for us is the following: for any dimension $d$ there exists a Bell inequality whose maximum violation by a quantum system can only be achieved if the system is locally isometric to a $d$-dimensional maximally entangled state~\cite{yang2013robust}. With the exception of the results from~\cite{Slofstra11xor} (to which we return in more detail below), however, all known self-tests for $d$-dimensional entangled states require either a number of inputs~\cite{yang2013robust,coalangelo} or outputs~\cite{mckague2016high} that scales at least linearly with $d$, i.e. the test has size exponential in the number of ebits tested.  

The situation is even less satisfying as soon as one attempts to certify an even slightly noisy system, where by noisy system we mean one that will only lead to a violation that approaches the quantum optimum up to a multiplicative factor $(1-\eps)$ for some $\eps>0$. The performance of known tests scales poorly with the ``robustness parameter'' $\eps$, which in virtually all cases is required to be inverse exponential in the number of ebits tested before any consequence can be drawn.\footnote{We survey the relevant results in more detail in Section~\ref{sec:related} below.} Is this dependence necessary? 

We study the question in the context of the simplest kind of Bell inequalities, two party binary output correlation inequalities. These are bipartite Bell inequalities where each site can be measured using any number of two outcome local observables, but only expectation values of the correlators of the outcomes obtained at each site are taken into account. Such inequalities can be equivalently formulated using the language of two-player XOR games, that we adopt from now on. An XOR game is a two-player one-round game $G$ in which the players' answers are restricted to be a single bit each, and the verifier's acceptance criterion only depends on the parity of these bits. Any binary output correlation inequality can be mapped into an XOR game and vice-versa. The bias $\beta^*$ of the XOR game, defined as twice the maximum deviation from $1/2$ of the players' success probability, is the quantity that plays the role of the quantum bound for the Bell inequality.

\subsection{Results}

Our main result is that XOR games can provide very efficient tests for high-dimensional entanglement, while at the same time being noise-robust --- to some extent. In the positive direction we show that for any $\eps>0$, there exists an XOR game with $\Theta(\eps^{-1/5})$ inputs for Alice and $\Theta(\eps^{-2/5})$ inputs for Bob such that any strategy that comes within a multiplicative $(1-\eps)$ of the optimal quantum bias $\beta^* = \sqrt{2}/2$ requires the use of a state that is close to a tensor product of $\Omega(\eps^{-1/5})$ EPR pairs. Thus both the number of settings and the certified number of ebits are inverse polynomial in $\eps$. (The number of outcomes, of course, is only two.) In the negative direction we show that, up to the exponents $-1/5$, no XOR game can lead to a better scaling: for any XOR game and any $\eps>0$ there exists a strategy coming within a multiplicative factor $(1-\eps)$ of the optimal bias that uses  $O(\eps^{-1})$ EPR pairs (irrespective of the number of inputs in the game). 

For our positive result we consider a family of XOR games introduced by Slofstra~\cite{Slofstra11xor}. For an integer $n\geq 2$, the game\footnote{This game should not be confused with the $\CHSH_q$ game introduced in~\cite{bavarian2015information}.} $\CHSH(n)$ has $n$ possible questions for Alice, indexed by integers $i\in\{1,\ldots,n\}$, and $n(n-1)$ possible questions for Bob, indexed by pairs  $(i,j)\in\{1,\ldots,n\}^2$ such that $i\neq j$. The game can be described as follows: the referee selects a pair $(i,j)\in \{1,\ldots,n\}^2$ such that $i\neq j$ uniformly at random. He sends either $i$ or $j$ to Alice (with probability $1/2$ each), and $(i,j)$ to Bob. The players have to provide answers $a,b\in\{0,1\}$ such that $a\oplus b = 1$ if $i>j$ and Alice received $i$, and $a\oplus b = 0$ in the remaining three cases. 

Note that $\CHSH(2)$ is the usual CHSH game, for which the optimal bias is $\beta^*(\CHSH)=\sqrt{2}/2$. Slofstra showed that $\beta^*(\CHSH(n)) = \sqrt{2}/2$ for all $n\geq 2$, and that strategies achieving the optimum bias in $\CHSH(n)$ require a Hilbert space of dimension $2^{\lfloor n/2 \rfloor}$. Our theorem implies a smooth degradation of this bound for $\eps>0$. 

\begin{theorem}\label{thm:main}
Let $\eps>0$, let $n=\Theta(\eps^{-1/5})$ be an integer and $(A_i,B_{ij},\ket{\psi})$ a strategy in $\CHSH(n)$ achieving bias at least $(1-\eps)\beta^*(\CHSH(n))$. Then $\ket{\psi}$ has entanglement entropy $\Omega(\eps^{-1/5})$.  
\end{theorem}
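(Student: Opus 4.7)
The plan is to deduce quantitative approximate anticommutation of Alice's observables from near-optimality of the strategy, and then to convert those relations into an entanglement-entropy lower bound via an approximate Clifford-algebra construction. First I would rewrite the bias as an average over the $\binom{n}{2}$ unordered pairs $\{i,j\}$, combining the two ordered queries $(i,j)$ and $(j,i)$ into a single sub-contribution. Optimising over Bob's two observables $B_{ij}, B_{ji}$, that contribution is bounded by $\|(A_j-A_i)\ket{\psi}\|+\|(A_j+A_i)\ket{\psi}\|$; using $A_i^{2}=A_j^{2}=I$, the identity $\|(A_j-A_i)\ket{\psi}\|^{2}+\|(A_j+A_i)\ket{\psi}\|^{2}=4$ combined with Cauchy--Schwarz yields the pairwise bound $2\sqrt{2}$ and the global Tsirelson bound $\sqrt{2}/2$ on the bias, with equality forcing $\bra{\psi}\{A_i,A_j\}\ket{\psi}=0$. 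A strategy of bias $(1-\eps)\beta^{*}$ therefore produces a total pairwise defect $\sum_{i>j}\delta_{ij}=O(\eps n^{2})$.

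Next I would apply quantitative CHSH rigidity to each pair: a local defect $\delta_{ij}$ forces Bob's observables $B_{ij},B_{ji}$ to lie within $O(\sqrt{\delta_{ij}})$ of $(A_j\pm A_i)/\sqrt{2}$ on $\ket{\psi}$, and since $B^{2}=I$ this implies $(A_j\pm A_i)^{2}/2\approx I$ on the state, equivalently $\|\{A_i,A_j\}\ket{\psi}\|^{2}\le C\,\delta_{ij}$ for an absolute constant $C$. Summing yields $\sum_{i\neq j}\|\{A_i,A_j\}\ket{\psi}\|^{2}=O(\eps n^{2})$. A Tur\'an-type pruning of the graph of ``bad'' pairs then produces a subset $S\subseteq[n]$ of size $|S|=\Omega(n)$ on which the per-pair bound $\|\{A_i,A_j\}\ket{\psi}\|\le\eta$ holds uniformly, for a suitable $\eta$ that depends on $\eps$ and on the pruning threshold.

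The heart of the argument is to turn $|S|$ approximately anticommuting $\pm 1$ observables into $\Omega(|S|)$ bits of entanglement. I would index $2^{|S|}$ Clifford words $A_T:=\prod_{i\in T}A_i$ (in a fixed order) by $T\subseteq S$ and analyse the Gram matrix $G_{T,T'}=\bra{\psi}A_T^{*}A_{T'}\ket{\psi}$, which equals the identity in the exact Clifford case. Once $\|G-I\|$ is $o(1)$, the $2^{|S|}$ vectors $\{A_T\ket{\psi}\}$ are linearly independent, and since they lie in the image of the GNS-type map $X\mapsto(X\otimes I)\ket{\psi}$ -- whose dimension is at most $\mathrm{rank}(\rho_A)^{2}$ -- one gets $\mathrm{rank}(\rho_A)\ge 2^{|S|/2}$; a standard smoothing step then upgrades this Schmidt-rank lower bound to entanglement entropy $\Omega(|S|)=\Omega(n)$.

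The main obstacle is the quantitative Gram-matrix analysis. Off-diagonal entries of $G$ are obtained by rewriting $A_T^{*}A_{T'}$ into a canonical word via $O(|S|^{2})$ pairwise transpositions, each incurring an error proportional to $\eta$, and a naive row-sum estimate would force $\eta$ to be inverse exponential in $|S|$, capping $|S|$ at $O(\log(1/\eps))$. Attaining $|S|=\Theta(\eps^{-1/5})$ requires a substantially sharper bound -- exploiting oscillatory cancellations among Clifford-word expectations, or bypassing the full Gram matrix via a direct second-Rényi-entropy argument on $\rho_A$ -- balanced against the per-pair budget $\eta^{2}\lesssim\eps$. The exponent $-1/5$ should emerge as the saturation point of this balance together with the Tur\'an-style subset selection, and pinning down exactly that power is the principal technical challenge.
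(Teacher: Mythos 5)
Your first step---extracting approximate pairwise anticommutation of Alice's observables from near-optimality via the embedded CHSH games and CHSH rigidity---matches the paper's Lemma~\ref{lem:anti}, and the Tur\'an-style selection of a good subset also appears in the paper (though applied later, to constructed qubit pairs rather than to the raw $A_i$). The problem is the heart of your argument, which you yourself flag as unresolved: the Gram-matrix analysis of the $2^{|S|}$ Clifford words $A_T\ket{\psi}$ accumulates error over $O(|S|^2)$ transpositions \emph{per word} and exponentially many words, which, as you note, caps $|S|$ at $O(\log(1/\eps))$---far short of $\Theta(\eps^{-1/5})$. You offer no concrete mechanism to beat this, only the hope of ``oscillatory cancellations'' or a R\'enyi-entropy shortcut. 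The paper avoids the exponential blowup entirely by never forming long products: it groups the $A_i$ into triples to build only $m=\lfloor n/3\rfloor$ qubit pairs $X_k = i\tilde{A}_{3k-2}A_{3k-1}$, $Z_k = iA_{3k-1}\tilde{A}_{3k}$ (Lemma~\ref{lem:ac-qubits}, after exactifying anticommutation with Lemma~\ref{lem:exact-ac}), then invokes the overlapping-qubits theorem (Theorem~\ref{t:EPRstabilizersfrommath}, from~\cite{reichardt16testing}) to round these to exactly commuting qubit pairs with error only $O(n\eta)$, and finally extracts $r$ near-perfect EPR pairs with total error $O(r^{5/2}\sqrt{\eps})$. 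The exponent $-1/5$ comes from setting this polynomial error to a constant, not from any balance in a Gram-matrix estimate.

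A second, independent gap: even if your Gram-matrix argument succeeded, a lower bound on Schmidt rank does not yield a lower bound on entanglement entropy. There is no ``standard smoothing step'' doing this---the paper explicitly remarks that for any $\delta>0$ there are states of arbitrarily large Schmidt rank with entanglement entropy below $\delta$. This is why the paper needs the much stronger conclusion that $\ket{\psi}$ is close \emph{in norm} to $\ket{\phi^+}^{\otimes r}\otimes\ket{\ancilla}$ (Corollary~\ref{t:EPRpairsfromstabilizers}), which is then converted to an entropy bound via strong subadditivity and Fannes' inequality (Lemma~\ref{lem:entropy}). Your proposal would need to be restructured around producing an approximate tensor-product state, not merely many linearly independent vectors.
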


Switching the parameters around, Theorem~\ref{thm:main} implies in particular that for any integer $n\geq 2$ and $\eps = O(n^{-5})$, any $\eps$-optimal strategy in $\CHSH(n)$ requires entanglement of dimension $2^{\Omega(n)}$.
The proof of Theorem~\ref{thm:main} in fact yields a stronger "rigidity'' result for the game $\CHSH(n)$, showing that for any strategy achieving bias at least $(1-\eps)$ times the optimum in $\CHSH(n)$ and any $r\leq \lfloor n/3 \rfloor$ there are local isometries that map the strategy to one that is within distance $O(r^{5/2}\sqrt{\eps})$ of a tensor product of $r$ ideal strategies for the game $\CHSH(2)$.\footnote{We refer to Section~\ref{sec:lb} for details.} 

Our negative result complements the lower bound from Theorem~\ref{thm:main}. We prove the following: 

\begin{theorem}\label{thm:ub}
Let $\eps>0$ and let $G$ be an XOR game. Then there exists an $\eps$-optimal strategy for $G$ using a maximally entangled state in $2^{\lceil \eps^{-1}\rceil}$ dimensions. 
\end{theorem}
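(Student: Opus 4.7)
My plan is to use Tsirelson's reformulation of the bias as a semidefinite program and reduce the dimension of its optimal solution.

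By Tsirelson's theorem, $\beta^*(G)$ equals $\max \sum_{s,t} c_{s,t} \langle u_s, v_t \rangle$ over unit vectors $u_s, v_t$ in a real inner-product space; moreover, any such unit vectors in dimension $d$ can be realized as $\pm 1$-valued observables $A_s = \sum_i (u_s)_i \gamma_i$, $B_t = \sum_i (v_t)_i \gamma_i$ using anticommuting Hermitian (Clifford) generators $\gamma_i$ acting on a maximally entangled state of dimension $2^{\lceil d/2 \rceil}$, attaining the same bias. It therefore suffices to exhibit unit vectors $\tilde u_s, \tilde v_t$ in dimension $d = 2\lceil 1/\eps \rceil$ realizing bias at least $(1-\eps)\beta^*$.

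To construct these low-dimensional vectors, I start with an optimal solution $u^*_s, v^*_t \in \R^D$ (which by Tsirelson can be taken with $D\leq \min(m,n)$) and consider the matrix $W = \sum_{s,t} c_{s,t}\, u^*_s (v^*_t)^T$, whose trace equals $\beta^*$. For any orthogonal projector $P_V$ onto a $d$-dimensional subspace $V$, the truncated vectors $P_V u^*_s$ and $P_V v^*_t$ have norm at most $1$ and yield bias $\Tr(W P_V) = \Tr(W_s P_V)$, where $W_s = (W+W^T)/2$. Taking $V$ to be the span of the top $d$ eigenvectors of $W_s$ makes this truncated bias equal to the sum of the top $d$ eigenvalues of $W_s$; the task reduces to showing that this sum is at least $(1-\eps)\Tr(W_s) = (1-\eps)\beta^*$ for some $d = O(1/\eps)$.

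The main obstacle is making this last bound sharp enough. A naive pigeonhole using $\Tr(W_s) = \beta^* \leq 1$ only guarantees that the top $d$ eigenvalues carry at least a $d/D$ fraction of the trace (the worst case being nearly-uniform positive eigenvalues), which forces $d = \Omega(D)$ and is useless when $D \gg 1/\eps$. Overcoming this requires exploiting additional structure of $W$: for example, the bound $\|W\|_{\mathrm{op}} \leq \|C\|_{\mathrm{op}} \leq 1$ on individual eigenvalues, a rescaling-plus-absorption trick in which one replaces the optimal vectors by $\sqrt{1-\eps/2}$-scaled versions so that the truncation slack can be accommodated within the norm budget, or an SDP-duality argument that identifies a distinguished low-rank subspace supporting essentially all of the bias. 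Once vectors in $\R^d$ are obtained, a final step pads each party's vectors with a private orthogonal auxiliary coordinate (which preserves all cross inner products) to produce unit vectors, and the Clifford construction then yields the desired observables on a maximally entangled state of dimension $2^{\lceil 1/\eps \rceil}$.
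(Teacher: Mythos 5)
There is a genuine gap, and you have in fact located it yourself: the whole argument rests on the claim that the top $d=O(1/\eps)$ eigenvalues of $W_s$ carry a $(1-\eps)$ fraction of $\Tr(W_s)=\beta^*$, and none of the three remedies you sketch establishes this. The claim is false for an arbitrary optimal SDP solution. Take $G$ to be a uniform mixture of $N$ disjoint copies of $\CHSH$ and choose the optimal vectors for the $k$-th copy inside the $k$-th of $N$ mutually orthogonal two-dimensional subspaces; then $W_s$ is a direct sum of $N$ blocks, its spectrum is essentially flat with $\Theta(N)$ eigenvalues of size $\Theta(1/N)$, and the top $d$ eigenvalues capture only an $O(d/N)$ fraction of the trace. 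The bound $\|W\|_{\mathrm{op}}\le 1$ only caps eigenvalues from above and gives no concentration; the $\sqrt{1-\eps/2}$ rescaling addresses norm slack after truncation (which is not the issue, since projecting only shrinks norms) and does nothing about the lost bias; and ``an SDP-duality argument identifying a distinguished low-rank subspace'' is a restatement of what needs to be proved rather than an argument. In the example above a two-dimensional optimal solution certainly exists, but your procedure, started from the wrong optimal solution, does not find it, and you have no control over which optimal solution you are handed.

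This is precisely why the paper (following Regev) uses a \emph{random} projection rather than a spectral truncation: mapping $x_s,y_t$ to $d$ coordinates via i.i.d.\ entries uniform in $\{1,-1,i,-i\}$ approximately preserves all inner products $x_s\cdot\overline{y_t}$ simultaneously in expectation, for \emph{any} starting solution. The delicate point is renormalizing the projected vectors to unit length without losing more than $1/d$ of the bias; the paper handles this by attaching phases $\|x'_s\|^{i\alpha}$ with $\alpha$ drawn from the hyperbolic secant distribution, using the identity $\Es{\alpha}[a^{i\alpha}]=2a-\Es{\alpha}[a^{2+i\alpha}]$ together with the fourth-moment bound $\Es{}\|x'_s\|^4\le 1+1/d$ to obtain a unit-vector solution of value at least $(1-1/d)\beta^*(G)$; Tsirelson's Clifford construction (your final step, which is fine) then yields observables on a maximally entangled state of dimension $2^{d}$. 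To salvage a deterministic truncation argument you would first have to show that some near-optimal solution has its bias supported on an $O(1/\eps)$-dimensional subspace, which is essentially the statement of the theorem.
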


The same result, with a slightly weaker upper bound $d = 2^{O(\eps^{-2})}$, is attributed to Regev in~\cite{CHTW04}. We nevertheless include a complete proof in Section~\ref{sec:ub}, as to the best of our knowledge the result had not previously appeared in print. 

\paragraph{Applications.} 
Our result can be interpreted as a robust, efficient self-test for the tensor product of $n$ EPR pairs: given any integer $n$, setting $\eps = O(n^{-5})$ any strategy in $\CHSH(3n)$ that achieves a bias at least $(1-\eps)$ times the optimal must be using a state that is close to an $n$-qubit maximally entangled state. The game $\CHSH(3n)$ only has $O(n^2)$ inputs per player, and it thus provides a very efficient test, with the number of inputs scaling only quadratically with the number of ebits tested. 

The work of Reichardt et al.~\cite{ReichardtUV13nature} demonstrates that self-testing results for the tensor product of many EPR pairs can form the basis for much more complex tasks, such as the classical delegation of an arbitrary quantum circuit to two isolated provers. It would be interesting to investigate whether the analysis of the $\CHSH(n)$ game that we give here could be leveraged to improve the efficiency of their protocol. Our self-testing result gives access to $n$ mutually anti-commuting pairs of observables on Alice's system, which can be combined to create arbitrary Pauli operators. Paulis of high weight will require taking the product of many observables, yielding a corresponding loss in error. However, one can easily imagine modifying the $\CHSH(n)$ game by introducing inputs associated with specific Pauli operators one is interested in. 

In~\cite{kaniewski2014entropic} the $\CHSH(n)$ game is used to test effective anti-commutators, from which a form of device-independent uncertainty relation can be derived. The stronger guarantees that come out of our analysis may have further applications to device-independent cryptography. 

\paragraph{Proof idea of Theorem \ref{thm:main}.}
We briefly discuss the proof of Theorem~\ref{thm:main}, referring to Section~\ref{sec:lb} for more details. Let $A_i$ (resp. $B_{ij}$) be Alice's (resp. Bob's) observables, and $\ket{\psi}$ the entangled state, in an $\eps$-optimal strategy for $\CHSH(n)$. Our proof proceeds in three steps. 

First we observe that $\CHSH(n)$ contains ${n \choose 2}$ copies of the $\CHSH$ game embedded inside it, one for each pair $\{i,j\} \subseteq \{1,\ldots,n\}$. By applying well-known rigidity results for the $\CHSH$ game we obtain approximate anti-commutation relations between each pair of Alice's observables. 

In the second step we show that any such $n$ pairwise approximately anti-commuting observables can be used to construct $m=\lfloor n/3 \rfloor$ pairs $(X_k,Z_k)$ of anti-commuting observables such that any two observables belonging to distinct pairs approximately commute. 
 
Finally, in the third and last step we show that the observables constructed in the second step can be interpreted as $m$ approximate overlapping qubits, where a qubit is defined as a pair of anti-commuting observables and two qubits are said to partially overlap if the associated observables approximately commute. We apply a theorem due to~\cite{reichardt16testing}, which shows that overlapping qubits are not far from exact qubits. The lower bound on entanglement entropy follows from an application of strong subadditivity and Fannes' inequality. 

\paragraph{Proof idea of Theorem \ref{thm:ub}.} We also briefly discuss the proof of Theorem~\ref{thm:ub}, referring to Section~\ref{sec:ub} for more details. 

As we mentioned before, the result of Theorem \ref{thm:ub} is attributed to Regev in~\cite{CHTW04}, with the slightly weaker upper bound $d = 2^{O(\eps^{-2})}$. The improvement from $\eps^{-2}$ to $\eps^{-1}$ requires a slightly more careful analysis of the performance of the randomly projected vectors in the semidefinite program associated to the XOR game. Although its implication for XOR games has not previously been spelled out, the improved bound is not new, and can be obtained in a number of different ways. For instance it follows from the analysis of Krivine rounding schemes in~\cite[Theorem 1.1]{naor2014krivine}, and was also obtained using Riesz's rounding technique in~\cite[Theorem 4]{Montanari16}. We provide a different analysis based on a rounding technique which was used in~\cite{NRV13} to analyze the non-commutative Grothendieck inequality and originates in Hirschman's proof~\cite{hirschman1952convexity} of the Hadamard three-line theorem in complex analysis.

\subsection{Related works}
\label{sec:related}

The general study of optimal strategies in XOR games was initiated by Tsirelson, who shows \cite{tsirel1987quantum} that for any XOR game with $n$ and $m$ inputs per party there is an optimal strategy that uses a maximally entangled state of dimension at most $2^{\lfloor r/2 \rfloor}$, where $r$ is the largest integer that satisfies $\binom{r+1}{2} \leq n+m$ and $r \leq \min(m,n)$. To establish this Tsirelson first proves that  to each player's input in the game can be associated a real $r$-dimensional unit vector, $x_i$ for Alice and $y_j$ for Bob, such that the correlations $x_i\cdot y_j$ achieve the optimal quantum bias in the game. Tsirelson then uses a clever construction, based on a representation of the Clifford algebra, to show that these vectors can be mapped to observables and a maximally entangled state in dimension $2^{\lfloor r/2 \rfloor}$ that achieve precisely the same correlations. Slofstra~\cite{Slofstra11xor} shows that Tsirelson's bound is tight for a slight variant of the $\CHSH(n)$ game. 

These results characterize the dimension of exactly optimal strategies in any XOR game. To the best of our knowledge, even if one considers arbitrary two-player games the $\CHSH(2n)$ game remains the most efficient (in terms of total number of inputs and outputs per party) test for $n$-qubit maximally entangled states. In particular, although there is strong indication that certain Bell inequalities, such as the $I_{3322}$ inequality, have a quantum bound that may only be achieved in the limit of infinite dimensions~\cite{PalV10I3322}, no such result has been rigorously proven.\footnote{There are examples of two-player one round games which provably require infinite-dimensional entanglement in order to be played optimally~\cite{LTW08,RegevV12a}, but these require the exchange of quantum messages between the referee and the players.} Recently Slofstra~\cite{slofstra2016tsirelson} showed the existence of a game for which a value $1$ can be attained using infinite-dimensional commuting-operator strategies, but it is not known if there exists a tensor product strategy achieving this value; in particular there is no ``optimal entangled state'' for this game. 

Lower bounds on entanglement become much weaker as soon as one considers strategies that only achieve a factor $(1-\eps)$ of the optimum. First we consider the case of XOR games. To the best of our knowledge, prior results focused on the dimension of the Hilbert space required for the strategy, which does not necessarily imply high entanglement entropy.\footnote{For any $\delta>0$, for any positive integer $n$, there exist states with Schmidt rank $n$ and entanglement entropy less than $\delta$.} The best prior lower bound on the dimension of the Hilbert space scales as $1/\eps$; precisely $\lceil 1/(2\eps)\rceil$~\cite{BrietBT11grot}. The bound proven in~\cite{BrietBT11grot} in fact applies to the dimension of the vectors that constitute an approximately optimal solution to the semidefinite program associated to an XOR game (see Section~\ref{sec:xor} for a definition). Another interesting work is~\cite{Slofstra11xor}, where approximate representations of $C^*$-algebras are used to establish lower bounds on the Hilbert space dimension needed for $\eps$-optimal strategies. The lower bound on dimension shown there scales as $\eps^{-1/12}$. In addition, \cite{Slofstra11xor} proves the lower bound $n-8\sqrt{2}n(n-1)\eps$ for the dimension of the vectors that constitute an approximately optimal solution to the semidefinite program associated to the $\CHSH(n)$ game.

Recent results derive better bounds for approximately optimal strategies by considering more general two-player games than XOR games. A natural approach to testing an $n$-qubit maximally entangled state consists in considering games based on the parallel repetition of $n$ copies of (a slight variant of) the $\CHSH$ game~\cite{mckague2016self}; however this parallel repetition requires a number of inputs and outputs that is exponential in $n$; furthermore no good bounds are known on the noise tolerance of the resulting tests. (See very recent work~\cite{coalangelo} giving robustness bounds for parallel self-testing that scale as $\eps = \poly^{-1}(n)$; however the number of inputs needed still scales exponentially with $n$.) A more direct approach to testing $d$-dimensional maximally entangled states is given in~\cite{yang2013robust}, but here again the number of measurement settings scales linearly with the dimension $d$ and no explicit bound on the noise tolerance is given. Recently one of us~\cite{reichardt16testing} showed a lower bound of $2^n$ on the dimension of $O(n^{-3/2})$-optimal strategies for a game with $O(n)$ questions per player that is similar to the $\CHSH(n)$ game but is not an XOR game. In this paper we re-use one of the main technical contributions of~\cite{reichardt16testing}, Theorem~\ref{t:EPRstabilizersfrommath}.

\section{Preliminaries}

\subsection{Notation} 

For a set $S$ we write $\Es{i\in S}$ for $|S|^{-1}\sum_{i\in S}$. All Hilbert spaces in this paper are finite-dimensional; we use a calligraphic letter $\mH,\mH_\reg{A},\mH_\reg{B}$ to denote a finite-dimensional Hilbert space. Given $A\in\Lin(\mathcal{H})$ the absolute value $|A|$ is defined as the unique positive square root of $A^\dagger A$. For $A\in\Lin(\mH)$ we write $A^{-1}$ or (when there is no ambiguity) $\frac{1}{A}$ for the Moore-Penrose pseudo-inverse of $A$. 

An observable is a Hermitian operator $A\in\Lin(\mH)$ that squares to identity. We will call an observable balanced if its 1-eigenspace and its (-1)-eigenspace have the same dimension. Note that the statement "$A$ is a balanced observable" is equivalent to the statement "there exists an observable $B$ that anti-commutes with $A$". 

For two vectors $\ket{\varphi}, \ket{\psi}\in \mH$ and $\delta>0$ we write $\ket{\varphi}\approx_\delta \ket{\psi}$ to mean $\|\ket{\varphi}-\ket{\psi}\|=O(\delta)$, where the implicit constant is universal. If $\ket{\varphi_k}$, $\ket{\psi_k}$ are families of states indexed by a common integer $k$ we write $\ket{\varphi_k}\approx_\delta \ket{\psi_k}$ to mean $\Es{k} \|\ket{\varphi_k}-\ket{\psi_k}\|=O(\delta)$, where $\Es{k}$ denotes a uniformly random index $k$ in the allowed range.

\subsection{XOR games}
\label{sec:xor}

For integers $n,m$ an $n\times m$ XOR game $G$ is specified by a real $n\times m$ matrix, that we often also call $G$, such that $\sum_{i,j} |G_{i,j}|=1$. A strategy for the players in $G$ is given by finite-dimensional Hilbert spaces $\mH_\reg{A}$ and $\mH_\reg{B}$, a collection of $n$ observables $A_i\in \Lin(\mH_{\reg{A}})$ for the first player, $m$ observables $B_j\in \Lin(\mH_\reg{B})$ for the second player, and a state $\ket{\psi}\in\mH_\reg{A}\otimes \mH_\reg{B}$ (in any finite dimension). The bias of the strategy is defined as
$$\beta^*(G;A_i,B_j,\ket{\psi}):= \sum_{i,j} G_{i,j} \bra{\psi} A_i\otimes B_j \ket{\psi}.$$
The bias of a game is the maximum bias achievable of any finite-dimensional strategy: 
$$\beta^*(G):= \sup_{d,A_i,B_j,\ket{\psi}} \Big|\sum_{i,j} G_{i,j} \bra{\psi} A_i\otimes B_j \ket{\psi}\Big|,$$
where the supremum is taken over all integers $d$, observables $A_i,B_j$ in $\C^d$ and states $\ket{\psi}\in\C^d\otimes \C^d$. Given $\eps>0$ we say that a strategy $(A_i,B_j,\ket{\psi})$ in $G$ is $\eps$-optimal if $\beta^*(G;A_i,B_j,\ket{\psi}) \geq (1-\eps)\beta^*(G)$.

Tsirelson~\cite{Tsirelson:85b} proved the following fact that will be relevant for our analysis: for any collection $x_i,y_j \in \R^d$ of real unit vectors there exists observables $A_i,B_j \in \C^D$ for $D \leq 2^{\lfloor d/2\rfloor}$ and $\ket{\psi} = D^{-1/2} \sum_{i=1}^D \ket{i} \ket{i}$ such that $\bra{\psi} A_i \otimes B_j \ket{\psi} = x_i \cdot y_j$ for every $i,j$. (Tsirelson's construction is based on the use of a representation of the Clifford algebra.) This observation allows to prove that the following semidefinite relaxation of the bias is tight:
\begin{align}
\beta^*(G) = \textsc{SDP}(G) &= \,\,\,\sup \,\sum_{i,j} G_{i,j}\, x_i \cdot y_j \label{eq:xor-sdp}\\
&\qquad x_i,y_j \in \R^{m+n}\notag\\
&\qquad \|x_i\|=\|y_j\|=1.\notag
\end{align}
We refer to~\cite{CHTW04} for a proof of this fact.

\subsection{The $\CHSH(n)$ games}

Our results are based on the analysis of a family of games $\CHSH(n)$, parametrized by an integer $n\geq 2$. The game $\CHSH(n)$ has $n$ possible questions for Alice, and $n(n-1)$ for Bob. The game can be described as follows:
\begin{enumerate}
\item The referee selects an ordered pair $i< j \in \{1,\ldots,n\}$ uniformly at random.
\item The referee sends either $i$ or $j$ to Alice (with probability $1/2$ each), and either $(i,j)$ or $(j,i)$ to Bob (again with probability $1/2$ each).
\item The players provide answers $a,b\in\{0,1\}$ respectively.
\item The referee accepts the answers if and only if $a\oplus b = 1$ if Alice received question $j$ and Bob $(j,i)$, and $a\oplus b = 0$ otherwise. 
\end{enumerate}

More concretely, the game matrix for the $\CHSH(n)$ has $n$ rows indexed by integers $k\in\{1,\ldots,n\}$, $n(n-1)$ columns indexed by pairs $(i,j)\in\{1,\ldots,n\}^2$ such that $i\neq j$, and such that the entry in the $k$-th row and $(i,j)$-th column is $0$ if $k\notin \{i,j\}$, $\frac{-1}{2n(n-1)}$ if $i>j$ and $k=i$, and $\frac{1}{2n(n-1)}$ otherwise. 

If $n=2$ then $\CHSH(2)$ is the $\CHSH$ game, which corresponds to the $\CHSH$ inequality of Clauser et al.~\cite{Clauser:69a}. In general $\CHSH(n)$ can be understood as playing one of ${n \choose 2}$ possible $\CHSH$ games, parametrized by ordered pairs $(i<j)$. While Bob, who is given the pair $(i,j)$, ``knows'' which game is being played, Alice only has partial information. 

The family of games $\CHSH(n)$ was first introduced in~\cite{Slofstra11xor}, who showed that $\omega^*(\CHSH(n)) = \cos^2(\pi/8)$ and that optimal strategies require local dimension at least $2^{\lfloor n/2\rfloor}$. We recall an optimal strategy for the players in this game. 

\begin{lemma}[\cite{Slofstra11xor}, Proposition~7]
Let $(A_i)_{i\in\{1,\ldots, n\}}$ be a collection of $n$ anti-commuting observables on $\C^d$ for Alice, and $(B_{ij}:=((-1)^{j<i} A_i^T+ A_j^T)/\sqrt{2})_{i \neq j\in\{1,\ldots,n\}}$ be observables for Bob. Let $\ket{\psi}$ be the maximally entangled state in $\C^d\otimes \C^d$. Then the strategy given by $(A_i,B_{ij},\ket{\psi})$ has success probability $\cos^2(\pi/8)$ in $\CHSH(n)$. Furthermore, for any integer $n$ there exists such a strategy with $d=2^{\lfloor n/2 \rfloor}$.
\end{lemma}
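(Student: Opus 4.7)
The plan is to reduce the computation of the success probability to a per-pair computation that mirrors CHSH, then invoke the Clifford algebra for the existence part.

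First I would write the success probability as a double average, over pairs $i<j$ and over the four equally-likely sub-cases of who receives what. Using the standard identity $\Pr(a\oplus b=w)=\tfrac{1}{2}(1+(-1)^w\langle\psi|A\otimes B|\psi\rangle)$ for $\pm 1$ observables, and reading off the winning parity from the rules (XOR is $1$ only in the sub-case ``Alice gets $j$, Bob gets $(j,i)$''), the contribution to the success probability from the ordered pair $i<j$ becomes
\[
\tfrac{1}{2}+\tfrac{1}{8}\Big(\langle A_i\otimes B_{ij}\rangle+\langle A_i\otimes B_{ji}\rangle+\langle A_j\otimes B_{ij}\rangle-\langle A_j\otimes B_{ji}\rangle\Big).
\]
Since $i<j$, the definition of $B$ gives $B_{ij}=(A_i^T+A_j^T)/\sqrt{2}$ and $B_{ji}=(A_i^T-A_j^T)/\sqrt{2}$ (the sign $(-1)^{j<i}$ flipping exactly because the two indices exchange order).

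Second, I would evaluate the four correlators on the maximally entangled state using the identity $\langle\psi|M\otimes N^T|\psi\rangle=\tfrac{1}{d}\Tr(MN)$. Since $A_i^2=I$, we have $\tfrac{1}{d}\Tr(A_i^2)=1$; and since the $A_k$'s pairwise anti-commute, cyclicity of the trace yields $\Tr(A_iA_j)=-\Tr(A_iA_j)$, hence $\tfrac{1}{d}\Tr(A_iA_j)=0$ for $i\ne j$. Substituting, the four correlators evaluate to $+\tfrac{1}{\sqrt{2}},+\tfrac{1}{\sqrt{2}},+\tfrac{1}{\sqrt{2}},-\tfrac{1}{\sqrt{2}}$ respectively; the minus in the last correlator is precisely matched by the minus sign attached to it in the formula above, so all four contributions add constructively. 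Thus every pair contributes $\tfrac{1}{2}+\tfrac{1}{2\sqrt{2}}=\cos^2(\pi/8)$, and averaging over pairs preserves this value. Along the way one also verifies that each $B_{ij}$ really is an observable: Hermiticity follows from $A_i^T$ being Hermitian when $A_i$ is, and $B_{ij}^2=I$ because the cross terms cancel by anti-commutation of $A_i^T$ and $A_j^T$.

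For the existence claim I would invoke the standard Jordan--Wigner (Clifford algebra) representation. On $\C^{2^k}$ with $k=\lfloor n/2\rfloor$, the $2k$ operators
\[
\gamma_{2\ell-1}=Z^{\otimes(\ell-1)}\otimes X\otimes I^{\otimes(k-\ell)},\qquad \gamma_{2\ell}=Z^{\otimes(\ell-1)}\otimes Y\otimes I^{\otimes(k-\ell)},\quad \ell=1,\dots,k,
\]
are Hermitian, square to $I$, and pairwise anti-commute (within each tensor slot one uses $\{X,Y\}=0$ and commutation/anti-commutation with $Z$ on adjacent slots). If $n=2k+1$ is odd, one adds $\gamma_{2k+1}=Z^{\otimes k}$, which anti-commutes with every $\gamma_\ell$ for $\ell\le 2k$ because within every tensor factor it either meets an $X$ or a $Y$. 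This provides $n$ anti-commuting observables in dimension $2^{\lfloor n/2\rfloor}$, completing the lemma.

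No step in this proof is a genuine obstacle; the only place requiring real care is the bookkeeping in the first step, where one must be sure that the single winning sub-case with XOR $=1$ is exactly the sub-case in which the correlator $\langle A_j\otimes B_{ji}\rangle$ is negative, so that the net contribution comes with a uniform sign across all four sub-cases.
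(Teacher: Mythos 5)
Your proof is correct. The paper itself does not prove this lemma---it is imported verbatim from Slofstra's Proposition~7---so there is nothing to compare against, but your argument is the standard one and all the steps check out: the per-pair decomposition of the winning probability matches the game matrix (the single sub-case with target parity $1$ is exactly the one whose correlator $\langle A_j\otimes B_{ji}\rangle$ evaluates to $-1/\sqrt{2}$, so the four terms add to $4/\sqrt{2}$), the identity $\bra{\psi}M\otimes N^T\ket{\psi}=\frac{1}{d}\Tr(MN)$ together with $\Tr(A_iA_j)=0$ for $i\neq j$ gives $\cos^2(\pi/8)$ for every pair, and the Jordan--Wigner operators (with the extra $Z^{\otimes k}$ when $n$ is odd) furnish $n$ anti-commuting observables in dimension $2^{\lfloor n/2\rfloor}$.
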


For the game $\CHSH = \CHSH(2)$ very good results are known characterizing the structure of $\eps$-optimal strategies. We use the following lemma from~\cite{McKagueYS12rigidity} (see also~\cite[Lemma~4.2]{ReichardtUV13leash}). 


\begin{lemma}[CHSH rigidity]\label{lem:chsh-rigid}
Let $\delta>0$ and $(\{A_0,A_1\},\{B_0,B_1\},\ket{\psi})$ a $\delta$-optimal strategy in $\CHSH$. Then there exists local isometries $U,V$ and a state $\ket{\psi'}$ such that, letting $\ket{\phi^+} = (1/\sqrt{2})(\ket{00}+\ket{11})$ and $X,Z$ the single-qubit Pauli operators,
\begin{gather}
\| U\otimes V \ket{\psi} - \ket{\phi^+}\ket{\psi'}\| = O(\sqrt{\delta}),\label{eq:chsh-rigidity-epr}\\
\max\Big\{\|\big( A_0 - U^\dagger (X\otimes \Id) U  \big)\otimes \Id\ket{\psi}\|,\;\| \big(A_1 - U^\dagger (Z\otimes \Id) U  \big)\otimes \Id\ket{\psi}\|\Big\} = O(\sqrt{\delta}),\label{eq:chsh-rigidity-xz}
\end{gather}
and letting $\tilde{A}_0:=\frac{B_0+B_1}{|B_0+B_1|} $ and $\tilde{A}_1:=\frac{B_0-B_1}{|B_0-B_1|} $, 
\begin{gather}
\max\Big\{\big\|\big( A_0 \otimes \Id - \Id \otimes \tilde{A}_0 \big)\ket{\psi}\big\|,\; \big\|\big( A_1 \otimes \Id - \Id \otimes \tilde{A}_1\big)\ket{\psi}\big\|\Big\} =  O(\sqrt{\delta}),\label{eq:chsh-btoa}\\
\max\Big\{\big\|\Id\otimes\big( \tilde{A}_0 - V^\dagger (X \otimes \Id) V  \big)\ket{\psi}\big\|,\;\big\| \Id\otimes \big(\tilde{A}_1 - V^\dagger (Z \otimes \Id) V  \big)\ket{\psi}\big\|\Big\} = O(\sqrt{\delta}).\label{eq:chsh-tildea-xz}
\end{gather}
\end{lemma}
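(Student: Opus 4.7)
The plan is to deduce the rigidity statements from the sum-of-squares formulation of Tsirelson's bound, which pins down the approximate algebraic structure of $\delta$-optimal CHSH strategies, and then to build the isometries $U,V$ via the standard SWAP circuit that extracts a logical qubit from a pair of (approximately) anti-commuting observables on each side.

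First I would apply Tsirelson's sum-of-squares identity
\[
2\sqrt{2}\,\Id - \bigl(A_0 \otimes B_0 + A_0 \otimes B_1 + A_1 \otimes B_0 - A_1 \otimes B_1\bigr) = \frac{1}{\sqrt{2}}\bigl((A_0 \otimes \Id - \Id \otimes B_+)^2 + (A_1 \otimes \Id - \Id \otimes B_-)^2\bigr),
\]
where $B_\pm := (B_0 \pm B_1)/\sqrt{2}$. Taking expectation in $\ket{\psi}$ and using $\delta$-optimality (so that the left-hand side's expectation is $O(\delta)$) yields
\[
\|(A_0 \otimes \Id - \Id \otimes B_+)\ket{\psi}\|^2 + \|(A_1 \otimes \Id - \Id \otimes B_-)\ket{\psi}\|^2 = O(\delta).
\]
Via the polar decompositions $B_0 + B_1 = \tilde{A}_0 |B_0+B_1|$ and $B_0 - B_1 = \tilde{A}_1 |B_0-B_1|$ — together with the observation that $\|\Id\otimes B_\pm \ket{\psi}\|^2 = 1 + O(\sqrt{\delta})$ forces $|B_0\pm B_1|/\sqrt{2}$ to act as $\Id$ on $\ket{\psi}$ up to $O(\sqrt{\delta})$ — these approximate identities imply \eqref{eq:chsh-btoa}. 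The relation $(B_0+B_1)(B_0-B_1) + (B_0-B_1)(B_0+B_1) = 0$, combined with the fact that $|B_0\pm B_1|^2 = 2\Id \pm \{B_0,B_1\}$ commutes with $\{B_0,B_1\}$, implies $\tilde{A}_0\tilde{A}_1 + \tilde{A}_1\tilde{A}_0 = 0$ exactly on the support of $|B_0+B_1|\cdot|B_0-B_1|$. Combining \eqref{eq:chsh-btoa} with this exact Bob-side anti-commutation yields approximate Alice-side anti-commutation $\|(A_0 A_1 + A_1 A_0)\otimes \Id\ket{\psi}\| = O(\sqrt{\delta})$.

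Next I would define $U$ as the standard SWAP isometry on Alice's space: adjoin an ancilla qubit in state $\ket{0}$, apply a Hadamard, a controlled-$A_1$ (playing the role of controlled-$Z$), another Hadamard, and finally a controlled-$A_0$ (playing the role of controlled-$X$). Define $V$ analogously from $\tilde{A}_0,\tilde{A}_1$ on Bob's space. Using approximate anti-commutation of $A_0,A_1$ on $\ket{\psi}$, exact anti-commutation of $\tilde{A}_0,\tilde{A}_1$, and the transfer relations \eqref{eq:chsh-btoa}, a gate-by-gate expansion of $U\otimes V\ket{\psi}$ shows that the two ancilla qubits are driven into $\ket{\phi^+}$ up to error $O(\sqrt{\delta})$, establishing \eqref{eq:chsh-rigidity-epr}. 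The same circuit-level identities show that $U A_0 U^\dagger$ acts on the ancilla qubit as $X\otimes \Id$ up to $O(\sqrt{\delta})$ error on $U\otimes V\ket{\psi}$, and similarly for $A_1$ and $Z$, giving \eqref{eq:chsh-rigidity-xz}; replacing Alice by Bob in the same argument (and using the exact anti-commutation of $\tilde{A}_0,\tilde{A}_1$) yields \eqref{eq:chsh-tildea-xz}.

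The main obstacle is propagating the $O(\sqrt{\delta})$ error through the multi-step circuit: each gate conjugates or transfers an operator past the others, and only approximate $\{A_0,A_1\}\otimes\Id\ket{\psi}\approx 0$ is available to justify identities that, in the exact qubit case, would follow from Pauli relations. Careful bookkeeping — standard in the proofs of CHSH rigidity in \cite{McKagueYS12rigidity} and \cite{ReichardtUV13leash} — absorbs a constant factor per triangle-inequality step and produces the stated $O(\sqrt{\delta})$ bounds; the $\sqrt{\delta}$ scaling is intrinsic, since $\delta$ itself controls a sum of squared norms.
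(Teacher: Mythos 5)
The paper does not prove this lemma at all: it is imported verbatim from the cited references (\cite{McKagueYS12rigidity}, see also \cite[Lemma~4.2]{ReichardtUV13leash}), and your sketch is exactly the standard argument from those works --- the sum-of-squares decomposition of the CHSH operator to get $\|(A_0\otimes\Id-\Id\otimes B_+)\ket{\psi}\|,\|(A_1\otimes\Id-\Id\otimes B_-)\ket{\psi}\|=O(\sqrt{\delta})$, the polar decomposition to pass to the exactly anti-commuting $\tilde{A}_0,\tilde{A}_1$, and the SWAP isometry to extract the logical qubit --- so in substance your proposal is correct and matches the intended proof. One local justification is weaker than what you need: the scalar fact $\|\Id\otimes B_\pm\ket{\psi}\|^2=1+O(\sqrt{\delta})$ alone only yields $\|\Id\otimes(|B_0\pm B_1|/\sqrt{2}-\Id)\ket{\psi}\|=O(\delta^{1/4})$ (bound $(T-\Id)^2\le(T^2-\Id)^2$ and you are left with a first-moment estimate); to get $O(\sqrt{\delta})$ you should instead use the vector relation $\Id\otimes B_\pm^2\ket{\psi}\approx_{\sqrt{\delta}} A\otimes B_\pm\ket{\psi}\approx_{\sqrt{\delta}}\ket{\psi}$, which follows from the same SOS consequences you already derived and then gives $\bra{\psi}(T-\Id)^2\ket{\psi}\le\|(T^2-\Id)\ket{\psi}\|^2=O(\delta)$ as required.
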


\subsection{Overlapping qubits}

The notion of ``overlapping qubits'' is introduced in~\cite{reichardt16overlapping}. Intuitively, a pair of overlapping qubits $i$ and $j$ is specified by two pairs of anti-commuting observables $\{X_i,Z_i\}$ and $\{X_j,Z_j\}$ such that $[P_i,Q_j] \approx 0$ for $P,Q\in\{X,Z\}$. The following theorem from~\ref{reichardt16testing} bounds the distance of partially overlapping qubits from exact qubits. 

\begin{theorem} \label{t:EPRstabilizersfrommath}
Let $\ket \psi$ be a state in $\mH_{\reg{A}} \otimes \mH_{\reg{B}}$.  Assume that for each $j \in \{1,\ldots, n\}$ there are observables $X_j, Z_j \in \Lin(\mH_{\reg{A}})$ and $X'_j, Z'_j \in \Lin(\mH_{\reg{B}})$ such that $\{X_j, Z_j\} = \{X'_j,Z'_j\}=0$, and for all $i \neq j$ and for all $P, Q\in\{X,Z\}$, 
\begin{equation*}
\max\big\{ \big\| [P_i, Q_j]\otimes \Id \ket \psi \|,\, \big\| \Id \otimes [P'_i,Q'_j] \ket{\psi}\big\|\big\}\leq \eta,
\end{equation*}
and
\begin{equation*}
\big\| P_j \otimes P_j' \ket \psi - \ket \psi \| \leq \eta \enspace .
\end{equation*}
Let
\begin{equation*}
\ket{\psi'} = \ket{\psi}_{\reg{AB}} \otimes \ket{\phi^+}_{\reg{A'A''}}^{\otimes n} \otimes \ket{\phi^+}^{\otimes n}_{\reg{B'B''}} \,\in\, \mH_{\reg{A}}\otimes (\C^2)^{\otimes n}_{\reg{A'}} \otimes (\C^2)^{\otimes n}_{\reg{A''}} \otimes \mH_{\reg{B}} \otimes (\C^2)^{\otimes n}_{\reg{B'}} \otimes (\C^2)^{\otimes n}_{\reg{B''}}
 \enspace, 
\end{equation*}
where $\ket{\phi^+} = \frac{1}{\sqrt{2}} (\ket{00}+\ket{11})$.
Then there exist observables $\hat{X}_i, \hat{Z}_i \in \Lin(\mH_{\reg{A}}\otimes (\C^2)^{\otimes n}_{\reg{A'}} \otimes (\C^2)^{\otimes n}_{\reg{A''}})$ and $\hat{X}'_i, \hat{Z}'_i \in \Lin(\mH_{\reg{B}} \otimes (\C^2)^{\otimes n}_{\reg{B'}} \otimes (\C^2)^{\otimes n}_{\reg{B''}})$ such that $\{\hat{X}_j, \hat{Z}_j\} = \{\hat{X}'_j,\hat{Z}'_j\}=0$, and for $i \neq j$ for $P,Q\in\{X,Z\}$, $[\hat{P}_i, \hat{Q}_j]= [\hat{P}_i', \hat{Q}_j'] = 0$ and 
$$\max\big\{\| \big( \hat{P}_j - P_j \otimes \Id_{\reg{A'A''}} \big) \otimes \Id_{\reg{BB'B''}}  \ket{\psi'} \|,\, \| \big( \Id_{\reg{AA'A''}}\otimes (\hat{P}_j' - P_j' \otimes \Id_{\reg{B'B''}} \big) \ket{\psi'} \| \big\}= O(n \eta),$$
 and furthermore
\begin{equation*}
\big\|\hat{P}_j \otimes \hat{P}_j' \ket{\psi'} - \ket{\psi'} \| = O(n \eta)
 \enspace .
\end{equation*}
\end{theorem}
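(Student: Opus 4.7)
The plan is to construct, independently on each side, a unitary $V$ on $\mH_{\reg{A}} \otimes (\C^2)^{\otimes n}_{\reg{A'}} \otimes (\C^2)^{\otimes n}_{\reg{A''}}$ (and $V'$ on Bob's side) such that the $\hat P_j$ are defined as the pullback $V^\dagger (\sigma^{\reg{A'_j}}) V$ of exact Paulis on the fresh ancilla qubit $\reg{A'_j}$. With this definition, exact commutation $[\hat P_i, \hat Q_j] = 0$ for $i \neq j$ and exact anticommutation $\{\hat X_j, \hat Z_j\} = 0$ are automatic, inherited from the exact Pauli algebra on the tensor factors $(\C^2)^{\otimes n}_{\reg{A'}}$. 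All of the technical content is therefore in showing that $\hat P_j \ket{\psi'} \approx P_j \otimes \Id \ket{\psi'}$ up to error $O(n\eta)$.

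The building block is a ``teleportation gadget'' $V_j$ acting on $\mH_{\reg{A}}$ together with the $j$-th EPR-pair registers $\reg{A'_j A''_j}$. Because $\{X_j, Z_j\} = 0$ holds exactly, the pair $(X_j, Z_j)$ defines a genuine qubit structure on $\mH_{\reg{A}}$, and a standard circuit built from Hadamards and controlled-$X_j$, controlled-$Z_j$ operations (conditioned on the ancilla register $\reg{A''_j}$) satisfies, on any state of the form $\ket{\psi}_{\reg{AB}} \otimes \ket{\phi^+}_{\reg{A'_j A''_j}}$, the intertwining relation $V_j^\dagger (\sigma^{\reg{A'_j}}) V_j \ket{\psi'} = P_j^{\mH_{\reg{A}}} \otimes \Id \ket{\psi'}$ for $\sigma \in \{X, Z\}$. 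I would then set $V := V_n V_{n-1} \cdots V_1$ and mirror the construction on Bob's side.

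For $j = 1$, the intertwining is the defining property of $V_1$ itself; for general $j$, the strategy is to commute $V_j$ past each $V_i$ with $i \neq j$, so that $V_j$ can be applied ``closest'' to the state. The key point is that $V_j$ is a fixed-depth circuit whose only operators touching $\mH_{\reg{A}}$ are polynomials in $X_j, Z_j$, and similarly for $V_i$. Thus each elementary swap of $V_j$ past $V_i$ reduces to a constant number of commutators $[P_i, Q_j]$ acting on $\ket{\psi}$, each bounded by $\eta$ in the hypothesis. Hence commuting $V_j$ through the remaining $n-1$ gadgets (and symmetrically in $V^\dagger$) costs $O(n\eta)$, giving $\|\hat P_j \ket{\psi'} - P_j \otimes \Id \ket{\psi'}\| = O(n\eta)$.

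The main obstacle is bookkeeping the error: the hypothesis bounds commutators only in the state-dependent norm $\|[\cdot,\cdot]\otimes \Id \ket{\psi}\|$, not operator norm, so each swap must be executed on a vector already in the ``correct'' form, propagating errors additively rather than multiplicatively. Once the Alice-side and Bob-side constructions are in hand with the approximations $\hat P_j \approx_{n\eta} P_j \otimes \Id$ and $\hat P_j' \approx_{n\eta} \Id \otimes P_j'$ on $\ket{\psi'}$, the final correlation bound $\|\hat P_j \otimes \hat P_j' \ket{\psi'} - \ket{\psi'}\| = O(n\eta)$ follows by the triangle inequality, using the hypothesis $\|P_j \otimes P_j' \ket{\psi} - \ket{\psi}\| \leq \eta$ and noting that the ancillary state $\ket{\phi^+}_{\reg{A'A''}}^{\otimes n} \otimes \ket{\phi^+}_{\reg{B'B''}}^{\otimes n}$ is left untouched by this replacement.
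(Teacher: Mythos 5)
First, a point of reference: the paper does not prove this theorem --- it is quoted from the cited ``overlapping qubits'' work of Chao--Reichardt--Sutherland--Vidick, so the comparison below is with the argument given there. Your high-level plan matches it: define one swap gadget $V_j$ per logical qubit (for exactly anticommuting $X_j,Z_j$ the clean choice is the unitary $V_j=\frac{1}{2}(\Id\otimes\Id+X_j\otimes\sigma^x+Y_j\otimes\sigma^y+Z_j\otimes\sigma^z)$ with $Y_j=iX_jZ_j$, acting on $\mH_{\reg{A}}$ and the $j$-th ancilla qubit, for which the intertwining is an exact operator identity, not just a statement on states), set $V=V_n\cdots V_1$, and pull back exact Paulis on the fresh qubits; the algebraic relations for the $\hat P_j$ are then automatic.

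The genuine gap is in the error propagation, which is the entire technical content. Since the Pauli on the $j$-th fresh qubit commutes exactly with every $V_i$, $i\neq j$, one gets $\hat P_j=U_j^\dagger(P_j\otimes\Id)U_j$ with $U_j=V_{j-1}\cdots V_1$, so the quantity to control is $\|[P_j\otimes\Id,\,U_j]\ket{\psi'}\|$. Telescoping, the $i$-th term requires bounding a commutator $[P_j,Q_i]$ evaluated on $V_{i-1}\cdots V_1\ket{\psi'}$, i.e.\ on a state that earlier gadgets have already transformed nontrivially on $\mH_{\reg{A}}$; those gadgets do not commute with $[P_j,Q_i]$, and the hypothesis only bounds $\|[P_j,Q_i]\otimes\Id\ket{\psi}\|$. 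Your claim that each swap ``reduces to a constant number of commutators acting on $\ket{\psi}$'' is therefore unjustified as stated, and saying the swap is ``executed on a vector already in the correct form'' names the obstacle without removing it. The missing mechanism is the transfer to the other side: the hypothesis $\|P_j\otimes P_j'\ket{\psi}-\ket{\psi}\|\leq\eta$ gives $P_j\otimes\Id\ket{\psi}\approx_\eta\Id\otimes P_j'\ket{\psi}$, so each Alice-side commutator can be replaced, at cost $O(\eta)$, by the corresponding Bob-side commutator, which commutes \emph{exactly} with all Alice-side gadgets and can then be estimated directly on $\ket{\psi}$ using the hypothesis. Thus the third hypothesis is essential to the main approximation bound, not only to the final correlation bound where your sketch first invokes it. With that step supplied, the additive $O(n\eta)$ accounting and the concluding triangle inequality go through as you describe.
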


The theorem has the following immediate corollary.

\begin{corollary} \label{t:EPRpairsfromstabilizers}
Under the assumptions of Theorem~\ref{t:EPRstabilizersfrommath}, for $D\in \{A,B\}$ there are unitaries $U_{\reg{DD'D''}} : \mH_{\reg{D}} \otimes (\C^2)^{\otimes n}_{\reg{D'}}  \otimes (\C^2)^{\otimes n}_{\reg{D''}} \rightarrow \mH_{\reg{D}} \otimes (\C^2)^{\otimes n}_{\reg{D'}}  \otimes (\C^2)^{\otimes n}_{\reg{D''}}$ and a state $\ket{\ancilla} \in \mH_\reg{A} \otimes (\C^2)^{\otimes n}_{\reg{A''}} \otimes \mH_\reg{B} \otimes (\C^2)^{\otimes n}_{\reg{B''}}$ such that 
\begin{align*}
\big\| U_\reg{AA'A''} \otimes U_\reg{BB'B"} \ket{\psi'} - \ket{\phi^+}_{\reg{A'}\reg{B'}}^{\otimes n} \otimes \ket{\ancilla}_{\reg{AA''BB''}} \big\| &= { O(n^{3/2} \eta)} \\
\big\| (U_{\reg{DD'D''}} ((X_j)_{\reg{D}} \otimes \Id_{D'D''}) U^\dagger_{\reg{DD'D''}} - (\sigma^x_j)_{\reg{D'}} \otimes \Id_{\reg{DD''}}) \otimes \Id_{other side} \ket{\phi^+}_{\reg{A'}\reg{B'}}^{\otimes n} \otimes \ket{\ancilla}_{\reg{AA''BB''}} \big\| &= {O(n^{3/2} \eta)} \\
\big\|(U_{\reg{DD'D''}} ((Z_j)_{\reg{D}} \otimes \Id_{D'D''}) U^\dagger_{\reg{DD'D''}} - (\sigma^z_j)_{\reg{D'}} \otimes \Id_{\reg{DD''}}) \otimes \Id_{other side} \ket{\phi^+}_{\reg{A'}\reg{B'}}^{\otimes n} \otimes \ket{\ancilla}_{\reg{AA''BB''}} \big\| &= {O(n^{3/2} \eta)} 
 \enspace .
\end{align*}
\end{corollary}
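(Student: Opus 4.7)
The plan is to combine Theorem~\ref{t:EPRstabilizersfrommath} with a standard fact from the representation theory of the $n$-qubit Pauli group. \textbf{First}, I would apply Theorem~\ref{t:EPRstabilizersfrommath} as a black box to obtain operators $\hat{X}_j, \hat{Z}_j$ on Alice's side and $\hat{X}'_j, \hat{Z}'_j$ on Bob's side that satisfy the Pauli (anti-)commutation relations \emph{exactly}, approximate the original observables on $\ket{\psi'}$ to within $O(n\eta)$, and satisfy the approximate-stabilizer condition $\|\hat{P}_j \otimes \hat{P}'_j \ket{\psi'} - \ket{\psi'}\| = O(n\eta)$ for every $j$ and $P\in\{X,Z\}$.

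\textbf{Second}, the pairs $\{(\hat{X}_j, \hat{Z}_j)\}_{j=1}^n$ generate a unitary representation of the $n$-qubit Pauli group on $\mH_{\reg{A}} \otimes (\C^2)^{\otimes n}_{\reg{A'}} \otimes (\C^2)^{\otimes n}_{\reg{A''}}$. Since this group admits a unique faithful irreducible representation of dimension $2^n$ (the standard one), every finite-dimensional representation decomposes as a direct sum of copies of this standard representation, and so there exists a unitary $U_{\reg{AA'A''}}$ with
\begin{equation*}
U_{\reg{AA'A''}} \hat{X}_j U_{\reg{AA'A''}}^\dagger = (\sigma^x_j)_{\reg{A'}} \otimes \Id_{\reg{AA''}}, \qquad U_{\reg{AA'A''}} \hat{Z}_j U_{\reg{AA'A''}}^\dagger = (\sigma^z_j)_{\reg{A'}} \otimes \Id_{\reg{AA''}},
\end{equation*}
and one constructs $U_{\reg{BB'B''}}$ analogously.

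\textbf{Third}, setting $\ket{\tilde\psi} := (U_{\reg{AA'A''}} \otimes U_{\reg{BB'B''}}) \ket{\psi'}$, the stabilizer conditions become $\|(\sigma^p_j)_{\reg{A'}} \otimes (\sigma^p_j)_{\reg{B'}} \ket{\tilde\psi} - \ket{\tilde\psi}\| = O(n\eta)$ for each of the $2n$ pairs $(p,j)$. The projectors $\Pi_{p,j} = \frac{1}{2}(\Id + (\sigma^p_j)_{\reg{A'}} \otimes (\sigma^p_j)_{\reg{B'}})$ commute pairwise, and their product is the orthogonal projection onto $\mathrm{span}(\ket{\phi^+}^{\otimes n}_{\reg{A'B'}}) \otimes \mH_{\reg{AA''BB''}}$. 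Using the commuting-projector inequality $\Id - \prod_k \Pi_k \leq \sum_k (\Id - \Pi_k)$ at the level of squared norms gives
\begin{equation*}
\big\| \ket{\tilde\psi} - \ket{\phi^+}^{\otimes n}_{\reg{A'B'}} \otimes \ket{\ancilla}_{\reg{AA''BB''}} \big\|^2 \;\leq\; \sum_{p,j} \big\| (\Id - \Pi_{p,j}) \ket{\tilde\psi}\big\|^2 \;=\; O\big(n \cdot (n\eta)^2 \big),
\end{equation*}
which yields the first displayed bound and identifies the ancilla state. The remaining two Pauli-approximation bounds follow by combining (i) the estimate $\| (U_{\reg{D}} X_j U_{\reg{D}}^\dagger - (\sigma^x_j)_{\reg{D'}} \otimes \Id_{\reg{DD''}}) \otimes \Id_{other}\ket{\tilde\psi}\| = O(n\eta)$, which is just Step 1 transported through the unitary, with (ii) the $O(n^{3/2}\eta)$ estimate from Step 3 used to replace $\ket{\tilde\psi}$ by $\ket{\phi^+}^{\otimes n} \otimes \ket{\ancilla}$ (at the cost of at most twice that error, since the operator in question has norm bounded by $2$).

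The main obstacle I expect to encounter is obtaining the correct $O(n^{3/2}\eta)$ scaling in Step 3: the naive triangle inequality across the $2n$ stabilizer constraints only yields $O(n^2\eta)$. Recovering the advertised bound requires exploiting the \emph{commutativity} of the stabilizers and working at the level of squared norms, which is where the improvement by a factor of $\sqrt{n}$ appears.
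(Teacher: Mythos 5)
Your proposal is correct, and it follows what is essentially the only natural route: the paper states this as an ``immediate corollary'' of Theorem~\ref{t:EPRstabilizersfrommath} and supplies no proof, so your argument is filling in exactly the steps the authors leave implicit. The two key ingredients are both sound: the group generated by the exactly (anti-)commuting $\hat{X}_j,\hat{Z}_j$ is a representation of the real $n$-qubit Pauli group in which the central element acts as $-\Id$, and since that group has a unique such irreducible representation (of dimension $2^n$), the representation splits as (standard irrep) $\otimes$ (multiplicity space), which is what the padding by the registers $\reg{A'A''}$ is for; and your use of the $2n$ commuting stabilizer projectors $\Pi_{p,j}$ together with $\Id - \prod_k \Pi_k \leq \sum_k(\Id-\Pi_k)$ correctly converts $2n$ constraints of size $O(n\eta)$ into a distance bound of $O(\sqrt{n}\cdot n\eta)=O(n^{3/2}\eta)$ rather than the lossy $O(n^2\eta)$ from a plain triangle inequality --- you are right that this is where the stated exponent comes from. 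The only detail worth writing out is that $\Pi\ket{\tilde\psi}$ is unnormalized; taking $\ket{\ancilla}$ to be the normalization of $(\bra{\phi^+}^{\otimes n}_{\reg{A'B'}}\otimes\Id)\ket{\tilde\psi}$ costs at most an additional factor of $2$ by the reverse triangle inequality, which is absorbed into the $O(\cdot)$. Your derivation of the last two displayed bounds (transport the $O(n\eta)$ operator approximation through $U$, then swap $\ket{\tilde\psi}$ for $\ket{\phi^+}^{\otimes n}\otimes\ket{\ancilla}$ at cost $O(n^{3/2}\eta)$ using that the relevant operators have norm at most $2$) is likewise correct.
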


Theorem~\ref{t:EPRstabilizersfrommath} requires observables that exactly anti-commute. The following lemma 
shows that approximately anti-commuting observables are never far from exactly anti-commuting ones. 

\begin{lemma}\label{lem:exact-ac}
Let $X,Z$ be balanced obervables on a space $\mH_\reg{A}$ of even dimension and let $\ket{\psi} \in \mH_\reg{A}\otimes \mH_\reg{B}$ be such that $\|\{X,Z\}\otimes \Id \ket{\psi}\|\leq \eps$. Then there exists a balanced observable $\tilde{Z}$ on $\mH_\reg{A}$ such that 
$$\|(Z-\tilde{Z})\otimes \Id \ket{\psi}\| \leq \sqrt{3/2} \,\eps,$$
 and  
$$\{X, \tilde{Z}\} = 0.$$
\end{lemma}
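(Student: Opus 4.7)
The plan is to construct $\tilde Z$ by zeroing the part of $Z$ that commutes with $X$ and replacing the part that anti-commutes with $X$ by the ``polar-decomposition unitary'' of the corresponding off-diagonal block, then to control the error by a triangle inequality. Since $X$ is balanced and $\mH_\reg{A}$ has even dimension, the $\pm 1$-eigenspaces $V_\pm$ of $X$ have the same dimension; let $P_\pm$ denote the corresponding projections and $\ket{\psi_\pm} := (P_\pm \otimes \Id)\ket{\psi}$. Writing $Z$ in block form $Z = \begin{pmatrix} A & B \\ B^\dagger & D \end{pmatrix}$ with $A, D$ Hermitian, a direct calculation gives
\begin{equation*}
\{X,Z\} = 2\begin{pmatrix} A & 0 \\ 0 & -D \end{pmatrix},
\end{equation*}
so the hypothesis is equivalent to $\|(A\otimes \Id)\ket{\psi_+}\|^2 + \|(D\otimes \Id)\ket{\psi_-}\|^2 \leq \eps^2/4$.

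Next I would take a full SVD $B = W \Sigma V^\dagger$ with $W, V$ unitary on $V_+, V_-$ respectively (possible because the balance of $X$ forces $B$ to be square), and set $\tilde B := WV^\dagger$ and
\begin{equation*}
\tilde Z := \begin{pmatrix} 0 & \tilde B \\ \tilde B^\dagger & 0 \end{pmatrix}.
\end{equation*}
Then $\tilde Z^\dagger = \tilde Z$ and $\tilde Z^2 = \Id$, so $\tilde Z$ is an observable; its block structure gives $\{X,\tilde Z\} = 0$; and by the preliminary remark identifying balanced observables as those admitting an anti-commuting partner, $\tilde Z$ is automatically balanced.

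Finally, I would bound the error via the decomposition $Z - \tilde Z = \tfrac12(Z+XZX) + (Z_0 - \tilde Z)$, where $Z_0 := \tfrac12(Z - XZX) = \begin{pmatrix} 0 & B \\ B^\dagger & 0 \end{pmatrix}$ is the off-diagonal part of $Z$. The identity $Z + XZX = X\{X,Z\}$ and the unitarity of $X$ yield $\|\tfrac12(Z + XZX)\otimes \Id\ket{\psi}\| \leq \eps/2$. For the remaining term, expanding $Z^2 = \Id$ in blocks gives $BB^\dagger = \Id - A^2$ and $B^\dagger B = \Id - D^2$, hence $(B-\tilde B)(B-\tilde B)^\dagger = W(I-\Sigma)^2 W^\dagger$; since the singular values $\sigma \in [0,1]$ satisfy $(1-\sigma)^2 \leq 1-\sigma^2$, this is $\leq W(I-\Sigma^2)W^\dagger = A^2$, and symmetrically $(B-\tilde B)^\dagger(B-\tilde B) \leq D^2$. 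Consequently $\|(Z_0 - \tilde Z)\otimes \Id\ket{\psi}\|^2 \leq \|(A\otimes \Id)\ket{\psi_+}\|^2 + \|(D\otimes \Id)\ket{\psi_-}\|^2 \leq \eps^2/4$, and the triangle inequality then gives $\|(Z - \tilde Z)\otimes \Id\ket{\psi}\| \leq \eps$, actually slightly sharper than the stated constant $\sqrt{3/2}$. The one delicate point is that $\tilde B$ must be a genuine unitary even when $B$ has nontrivial kernel; this is arranged by taking a \emph{full} (rather than reduced) SVD so that $W$ and $V$ are unitaries on $V_\pm$, which leaves the singular-value estimates above unchanged.
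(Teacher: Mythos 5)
Your proof is correct and follows essentially the same route as the paper's: block-decompose $Z$ with respect to the eigenspaces of $X$ and replace the off-diagonal block by the unitary factor of its full singular value decomposition, using $Z^2=\Id$ and the elementary inequality on singular values in $[0,1]$ to control the error. Your error analysis is in fact slightly sharper than the paper's: by treating the diagonal part exactly via the identity $Z+XZX=X\{X,Z\}$ and combining the two pieces with a triangle inequality on vector norms (rather than the operator bound $(M+N)^2\le 2M^2+2N^2$ together with $(1-\sigma)^2\le 2(1-\sigma^2)$), you obtain the constant $1$ in place of $\sqrt{3/2}$, which of course still implies the stated bound.
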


\begin{proof}
Make a change of basis so that  
$$X = \begin{pmatrix} \Id & 0 \\ 0 & -\Id\end{pmatrix}\qquad\text{and}\qquad Z = \begin{pmatrix} A & C \\ C^\dagger & B \end{pmatrix},$$
where the size of the blocks is $dim(\mH_\reg{A})/2$. Making a further change of basis we may also assume that $C$ is diagonal with non-negative real entries; this change of basis comes from the singular value decomposition of $C$ and does not affect the form of $X$. With this notation, we get 
\begin{equation}\label{eq:ac-0}
\{X,Z\}^2 = \begin{pmatrix} 4A^2 & 0 \\ 0 & 4B^2 \end{pmatrix}.
\end{equation} 

Let 
$$\tilde{Z} = \begin{pmatrix} 0 & \Id\\ \Id & 0 \end{pmatrix}.$$
Then $\{X,\tilde{Z}\}=0$, and it remains to show that $\|(Z-\tilde{Z})\otimes \Id \ket{\psi}\| \leq \sqrt{3/2} \,\eps$. 

Using $Z^2 = \Id$, we get $C^2 = \Id - A^2$ and $C^2 = \Id-B^2$. Using $C^2 \leq \Id$ and our assumption that $C$ is diagonal with non-negative real entries, we get
\begin{equation}
(\Id-C)^2 \leq 2(\Id - C^2) 
\end{equation}
and from here we get 
\begin{equation}\label{eq:ac-1}
(\Id - C)^2 \leq 2 A^2, \qquad (\Id - C)^2 \leq 2 B^2.
\end{equation}
We can then bound
\begin{align*}
(Z-\tilde{Z})^2 &\leq 2\begin{pmatrix} A & 0 \\ 0 & B \end{pmatrix}^2 + 2 \begin{pmatrix} 0 & C-\Id \\ C^\dagger-\Id & 0\end{pmatrix}^2 \\
& \leq \frac{1}{2}\{X,Z\}^2 + \{X,Z\}^2, 
\end{align*}
where to bound the first term we used~\eqref{eq:ac-0}, and to bound the second we used~\eqref{eq:ac-1} and~\eqref{eq:ac-0}. 
The lemma follows by evaluating both sides of the operator inequality on $\rho_{\reg{A}} = \Tr_{\reg{B}} \ket{\psi}\bra{\psi}$. 
\end{proof}

\section{Upper bound: playing XOR games with low entanglement}
\label{sec:ub}

The goal of this section is to prove Theorem \ref{thm:ub}.

Let $(A_s,B_{t},\ket{\psi})$ be an optimal strategy for the players in $G$, where $\ket{\psi} \in \C^D \otimes \C^D$. Let 
$$x_s = \bra{\psi} A_s \otimes \Id ,\qquad y_t = \bra{\psi} \Id \otimes B_t ,$$
and observe that $x_s,y_t$ are complex $D^2$-dimensional unit vectors such that $\sum_{s,t} G_{s,t} x_s \cdot \overline{y_t} = \beta^*(G)$ (where for complex row vectors $x=(x_1,\ldots,x_l)$ and $y=(y_1,\ldots,y_l)$ we denote $x\cdot y = \sum_i x_i y_i$). Let $d$ be an integer and  $\{g_{k p}\}$, for $k\in\{1,\ldots,d\}$ and $p\in\{1,\ldots,D^2\}$, be independent and uniformly distributed in $\{1,-1,i,-i\}$. Define $x'_s,y'_t \in \C^d$ by 
$$ (x'_s)_k = \frac{1}{\sqrt{d}}\sum_p g_{kp} (x_s)_{p} ,\qquad (y'_t)_k = \frac{1}{\sqrt{d}}\sum_p g_{k p}(y_t)_p,$$
for $k=1,\ldots,d$. Let $\alpha$ be a real parameter distributed according to the hyperbolic secant distribution. The only property of this distribution relevant for the analysis is that it satisfies that for any $a>0$, $\Es{\alpha}[a^{i\alpha}]=2a-\Es{\alpha}[a^{2+i\alpha}]$. Using this relation we obtain
\begin{align}
\Es{}\Big[ \sum_{s,t} G_{s,t} \frac{x'_s}{\|x'_s\|}\|x'_s\|^{i\alpha} \cdot \frac{\overline{y'_t}}{ \|y'_t\|} \|y'_t\|^{i\alpha} \Big] &= 2\sum_{s,t} G_{s,t} x_s \cdot \overline{y_t} - \Es{}\Big[ \sum_{s,t} G_{s,t} x'_s\|x'_t\|^{1+i\alpha} \cdot \overline{y'_t} \|y'_t\|^{1+i\alpha} \Big].\label{eq:ub-1}
\end{align}
We bound the second term on the right-hand side. For this we interpret $x'_s \|x'_s\|^{1+i\alpha}$ and $y'_t \|y'_t\|^{1-i\alpha}$ as a vector solution to the semidefinite program~\eqref{eq:xor-sdp} associated to the XOR game $G$, where the vectors are infinite-dimensional complex vectors in $L_2(\C)$.\footnote{Although a priori $\textsc{SDP}(G)$ considers a supremum over real finite-dimensional vectors, the extension to infinite-dimensional complex vectors does not allow for a larger value, as the vectors can always be projected down to their finite-dimensional span, and made real by considering $x\mapsto \Re(X)\oplus\Im(x)$ and $y\mapsto \Re(y) \oplus (-\Im(y))$.} Let $z\in\C^{D^2}$ be any vector among the $x_s$, $y_t$, and $z'\in L_2(\C)$ the associated vector, $x'_s \|x'_s\|^{1+i\alpha}$ or $y'_t \|y'_t\|^{1-i\alpha}$. We compute
\begin{align*}
\|z'\|^2 &= \frac{1}{d^2} \Es{} \Big( \sum_{k=1}^d \big| \sum_p g_{k p} z_p\Big|^2\Big)^2\\
&=\frac{1}{d^2}\Es{}\Big[ \sum_{k,k'=1}^d\sum_{p,q,r,s=1}^D g_{kp} \overline{g_{kq}} g_{k'r} \overline{g_{k's}}\, z_p \overline{z_q} z_r \overline{z_s} \Big]\\
&=\frac{1}{d^2}\Big( d\sum_{p\neq q} |z_p|^2|z_q|^2 + d^2\sum_{p,q} |z_p|^2|z_q|^2\Big)\\
&\leq \frac{d^2+d}{d^2} \|z\|^4 = 1 + \frac{1}{d}. 
\end{align*}
Using that the objective value of~\eqref{eq:xor-sdp} scales linearly with the norm of the vectors, this bound lets us upper bound the modulus of the second term on the right-hand side in~\eqref{eq:ub-1} by $(1+1/d)\beta^*(G)$, so that
$$\Es{}\Big[ \sum_{s,t} G_{s,t} \frac{x'_s}{\|x'_s\|} \|x'_s\|^{i\alpha} \cdot \frac{\overline{y'_t}}{ \|y'_t\|} \|y'_t\|^{i\alpha} \Big] \geq \Big(1 - \frac{1}{d}\Big)\beta^*(G).$$
In particular there must exist a choice of $g_{kp}$ and $\alpha$ such that the complex $d$-dimensional unit vectors $\frac{x'_s}{\|x'_s\|} \|x'_s\|^{i\alpha}$ and $\frac{y'_t}{\|y'_t\|} \|y'_t\|^{-i\alpha}$ yield a value for the left-hand side of~\eqref{eq:ub-1} that is at least $(1-1/d)\beta^*(G)$. Decomposing these vectors into real and imaginary parts as described earlier we obtain a real vector solution of dimension $2d$ achieving bias $(1-1/d)\beta^*(G)$ in~\eqref{eq:xor-sdp}. Applying Tsirelson's construction (as described in Section~\ref{sec:xor}) yields observables in dimension $2^{d}$ achieving the same value in $G$, proving Theorem \ref{thm:ub}.  

\section{Lower bound: rigidity for the $\CHSH(n)$ games}
\label{sec:lb}

The goal of this section is to prove Theorem \ref{thm:main}.

We start by assuming without loss of generality that $\mH_\reg{A}$, $\mH_\reg{B}$ are finite dimensional Hilbert spaces of even dimension $d$ each, and that Alice's observables are balanced; this assumption is technically required in the proof, and can always be satisfied by taking the direct sum with a space of appropriate dimension on which the state $\ket{\psi}$ has no mass, and on which all operators are extended by taking the direct sum with an appropriate reflection.

\medskip

The proof of Theorem \ref{thm:main} has several steps, which we give in the following lemmas. Our first lemma shows that in any good strategy for the $\CHSH(n)$ game, Alice's observables must approximately pairwise  anti-commute. 

\begin{lemma}\label{lem:anti}
Let $n\geq 2$, $\eps>0$ and $(A_i,B_{ij},\ket{\psi})$ an $\eps$-optimal strategy in $\CHSH(n)$. For all $i < j$ let 
$$\tilde{A}_{ij}:=\frac{B_{ij}+B_{ji}}{|B_{ij}+B_{ji}|},\qquad \tilde{A}_{ji} := \frac{B_{ij}-B_{ji}}{|B_{ij}-B_{ji}|}.$$
Then the following hold: 
\begin{align}
 \Es{i< j} \|\{A_i,A_j\}\otimes \Id \ket{\psi}\| = O(\sqrt{\eps}),\qquad\Es{i < j} \|\Id\otimes \{\tilde{A}_{ij},\tilde{A}_{ji}\} \ket{\psi}\| = O(\sqrt{\eps}),
\label{eq:anti}\\
\max\Big\{\Es{i< j}\big\|\big( A_i \otimes \Id - \Id \otimes \tilde{A}_{ij}\big)\ket{\psi}\big\|,\;  \Es{i<j}\big\|\big(A_j \otimes \Id - \Id \otimes \tilde{A}_{ji}\big)\ket{\psi}\big\|\Big\} =  O(\sqrt{\eps}).\label{eq:chsh-atotildea}
\end{align}
\end{lemma}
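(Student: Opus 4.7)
The plan is to decompose $\CHSH(n)$ as a uniform average of $\binom{n}{2}$ embedded copies of the standard CHSH game, one for each unordered pair $\{i,j\}$, and then apply Lemma~\ref{lem:chsh-rigid} to each copy. Conditioned on the referee selecting the ordered pair $i<j$, the question distribution is uniform over $\{i,j\}\times\{(i,j),(j,i)\}$ and a direct inspection of the winning predicate (recall that $a\oplus b=1$ is required exactly when Alice received $j$ and Bob received $(j,i)$) shows that the resulting subgame is a $\CHSH$ instance under the identification $(A_0,A_1,B_0,B_1)=(A_i,A_j,B_{ij},B_{ji})$. Each subgame therefore has bias bounded by $\sqrt{2}/2$, and since the overall bias is the uniform average of the subgame biases, we obtain $\beta^*(\CHSH(n))=\sqrt{2}/2$ and, writing $\eps_{ij}\ge 0$ for the relative gap from optimum of subgame $\{i,j\}$, $\Es{i<j}[\eps_{ij}]\le \eps$.

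With this reduction in hand I would invoke Lemma~\ref{lem:chsh-rigid} for each pair $i<j$ with $\delta=\eps_{ij}$. Under the identification above, the operators $(B_0+B_1)/|B_0+B_1|$ and $(B_0-B_1)/|B_0-B_1|$ appearing there coincide exactly with $\tilde A_{ij}$ and $\tilde A_{ji}$ as defined in the statement of the present lemma. The two pairwise bounds needed for \eqref{eq:chsh-atotildea} then follow directly from \eqref{eq:chsh-btoa}. For the anti-commutation estimates in \eqref{eq:anti}, I would use the approximate isometry \eqref{eq:chsh-rigidity-epr} together with \eqref{eq:chsh-rigidity-xz} (and its Bob-side counterpart \eqref{eq:chsh-tildea-xz}): pushing the observables $A_i,A_j$ through the local isometry $U\otimes V$ onto the approximate image $\ket{\phi^+}\ket{\psi'}$, they become approximately $X\otimes \Id$ and $Z\otimes \Id$ acting on the $\ket{\phi^+}$ factor, and since $\{X,Z\}=0$ exactly, unitarity of $U\otimes V$ yields $\|\{A_i,A_j\}\otimes \Id\ket{\psi}\|=O(\sqrt{\eps_{ij}})$; the same trip through $V$ on Bob's side using \eqref{eq:chsh-tildea-xz} gives the companion bound on $\Id\otimes\{\tilde A_{ij},\tilde A_{ji}\}\ket{\psi}$.

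Finally, averaging over $i<j$ and applying Jensen's inequality, $\Es{i<j}\sqrt{\eps_{ij}}\le \sqrt{\Es{i<j}\eps_{ij}}\le \sqrt\eps$, which promotes each pairwise $O(\sqrt{\eps_{ij}})$ estimate to the claimed $O(\sqrt\eps)$ in expectation. The main obstacle is really just bookkeeping: one must verify that the embedded CHSH subgame has the correct sign convention so that $\tilde A_{ij},\tilde A_{ji}$ in the statement match the Bob-side operators produced by Lemma~\ref{lem:chsh-rigid}, and that the trip through the approximate isometry onto $\ket{\phi^+}\ket{\psi'}$ composes cleanly (each of the three substitutions $A_i\to U^\dagger X U$, $A_j\to U^\dagger Z U$, $U\otimes V\ket\psi\to\ket{\phi^+}\ket{\psi'}$ incurs $O(\sqrt{\eps_{ij}})$ error, with bounded operator norms throughout). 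Nothing analytically deep is required, as the entire argument is a reduction to the known rigidity of the $\CHSH(2)$ game.
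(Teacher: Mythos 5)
Your overall architecture matches the paper's: decompose $\CHSH(n)$ into $\binom{n}{2}$ embedded $\CHSH$ subgames with the identification $(A_0,A_1,B_0,B_1)=(A_i,A_j,B_{ij},B_{ji})$, define conditional gaps $\eps_{ij}$ with $\Es{i<j}[\eps_{ij}]\le\eps$, invoke Lemma~\ref{lem:chsh-rigid} pairwise, and average using concavity of the square root. Your derivation of \eqref{eq:chsh-atotildea} from \eqref{eq:chsh-btoa} is exactly the paper's.

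However, your argument for \eqref{eq:anti} has a gap. You propose to substitute both $A_i\to U^\dagger(X\otimes\Id)U$ and $A_j\to U^\dagger(Z\otimes\Id)U$ inside the product $A_iA_j\otimes\Id\ket{\psi}$, asserting that each substitution costs $O(\sqrt{\eps_{ij}})$ ``with bounded operator norms throughout.'' Bounded operator norms only justify the substitution for the factor that acts \emph{directly} on $\ket{\psi}$ (the inner $A_j$); the guarantee \eqref{eq:chsh-rigidity-xz} is state-dependent, so it says nothing about the action of $A_i-U^\dagger(X\otimes\Id)U$ on the rotated state $\big(U^\dagger(Z\otimes\Id)U\otimes\Id\big)\ket{\psi}$, and the error term $\big(A_i-U^\dagger(X\otimes\Id)U\big)U^\dagger(Z\otimes\Id)U\otimes\Id\ket{\psi}$ is not controlled by your hypotheses. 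The missing ingredient is the reflection step the paper performs first: use \eqref{eq:chsh-btoa} to replace $A_j\otimes\Id\ket{\psi}$ by $\Id\otimes\tilde A_{ji}\ket{\psi}$ (and symmetrically in the other term of the anticommutator), giving $\{A_i,A_j\}\otimes\Id\ket{\psi}\approx(A_i\otimes\tilde A_{ji}+A_j\otimes\tilde A_{ij})\ket{\psi}$. Once the two operators in each term live on different tensor factors they commute, so each can be independently replaced by its ideal counterpart acting on $\ket{\psi}$ via \eqref{eq:chsh-rigidity-xz} and \eqref{eq:chsh-tildea-xz}; then \eqref{eq:chsh-rigidity-epr} together with the identity $(X\otimes Z+Z\otimes X)\ket{\phi^+}=0$ finishes the bound. (An equivalent fix is to reflect through $\ket{\phi^+}$ using $(Z\otimes\Id)\ket{\phi^+}=(\Id\otimes Z)\ket{\phi^+}$ after applying \eqref{eq:chsh-rigidity-epr}, but some such commutation step is indispensable.)
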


\begin{proof}[Proof of Lemma~\ref{lem:anti}]
We first observe that the game $\CHSH(n)$ is equivalent to the following game:
\begin{enumerate}
\item The referee selects a pair $(i,j)\in\{1,\ldots,n\}^2$ such that $i< j$ uniformly at random;
\item The referee selects $(x,y)\in\{0,1\}^2$ uniformly at random;
\item If $x=0$ the referee sends $i$ to Alice, and if $x=1$ he sends her $j$. If $y=0$ he sends $(j,i)$ to Bob and if $y=1$ the referee sends him $(i,j)$. 
\item Upon receiving answers $(a,b)$ the referee accepts if and only if $a\oplus b = x\wedge y$. 
\end{enumerate}
For any $(i,j)\in\{1,\ldots,n\}^2$ such that $i < j$ let $\eps_{ij}$ be such that the players' strategy achieves a bias $(1-\eps_{ij})\beta^*(\CHSH)$ in the game, conditioned on the referee having selected the pair $(i,j)$ in the first step of the reformulation above. Then $\Es{i<j} [\eps_{ij}] = \eps$, and for any $i<j$  the strategy $(A_i,A_j,B_{ij},B_{ji},\ket{\psi})$ is an $\eps_{ij}$-optimal strategy in the $\CHSH$ game.

 The relation~\eqref{eq:chsh-atotildea} then follows directly from~\eqref{eq:chsh-btoa} from the CHSH rigidity lemma, Lemma~\ref{lem:chsh-rigid}, and concavity of the square root function. 
To prove~\eqref{eq:anti}, write
\begin{align*}
  \{ A_i, A_j \}\otimes \Id \ket{\psi} &\approx_{\sqrt{\eps_{ij}}}(A_i \otimes \tilde{A}_{ji} + A_j \otimes \tilde{A}_{ij}) \ket{\psi}\\
	&\approx_{\sqrt{\eps_{ij}}} (U\otimes V)^\dagger \big((X\otimes Z+Z\otimes X)\otimes \Id \big) (U\otimes V)\ket{\psi}\\
		&\approx_{\sqrt{\eps_{ij}}} (U\otimes V)^\dagger \big((X\otimes Z+Z\otimes X)\ket{\phi^+}\big)\otimes \ket{\psi'} \\
	&= 0,
	\end{align*}
	where the first line uses~\eqref{eq:chsh-btoa}, the second~\eqref{eq:chsh-rigidity-xz} and~\eqref{eq:chsh-tildea-xz} (here the isometries $U$ and $V$ are allowed to depend on the pair $(i,j)$), the third~\eqref{eq:chsh-rigidity-epr}  and the fourth is by definition of $\ket{\phi^+}$. Averaging over $i<j$ and using concavity of the square root function proves the first part of~\eqref{eq:anti}. The second part follows similarly (alternatively, from the first part using~\eqref{eq:chsh-atotildea}).
\end{proof}

Given $n$ pairwise perfectly anticommuting observables $A_1,\ldots,A_n$, we can define $m=\lfloor n/3 \rfloor$ pairs of observables 
$$X_k = iA_{3k-2} A_{3k-1}\qquad\text{and}\qquad Z_k = iA_{3k-1}A_{3k},$$
for $k=1,\ldots,\lfloor n/3 \rfloor$, such that $\{X_k,Z_k\}=0$ and $[P_k,Q_\ell]=0$ for $k\neq \ell$ and $P,Q\in\{X,Z\}$. The following lemma shows that essentially the same construction also works in the approximate case. 

\begin{lemma}\label{lem:ac-qubits}
Let $\delta>0$ and $A_1,\ldots,A_n$ and $A'_1,\ldots,A'_n$ be observables such that 
\begin{equation}\label{eq:ac-qubits-0}
\Es{i} \| (A_i\otimes \Id - \Id \otimes A_i')\ket{\psi} \| \leq \delta\qquad\text{and}\qquad \Es{i\neq j} \|\{A_i,A_j\} \otimes \Id \ket{\psi}\| \leq \delta.
\end{equation}
Then there exists $m = \lfloor n/3 \rfloor$ pairs of observables $X_k,Z_k$ and $X'_k,Z'_k$ such that for all $k\neq \ell$, $\{X_k,Z_k\}=\{X'_k,Z'_k\}=0$ and for $P,Q\in\{X,Z\}$, 
$$\Es{k\neq \ell}\| [P_k,Q_\ell]\otimes \Id \ket{\psi} \| = O({\delta}),\qquad \Es{k\neq \ell} \| \Id\otimes [P'_k,Q'_\ell] \ket{\psi} \| = O({\delta}),$$
and 
$$ \Es{k} \|(P_k \otimes \Id - \Id \otimes P'_k) \ket{\psi}\|= O({\delta}).$$
\end{lemma}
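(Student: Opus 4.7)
I would implement, in the approximate setting, the exact construction noted just before the lemma. On Alice's side set $\tilde X_k = iA_{3k-2}A_{3k-1}$ and $\tilde Z_k = iA_{3k-1}A_{3k}$; on Bob's side set $\tilde X'_k = iA'_{3k-1}A'_{3k-2}$ and $\tilde Z'_k = iA'_{3k}A'_{3k-1}$, where the reversal in the index order is forced by the transpose-like symmetry of the cross-side correspondence $A_i \otimes \Id \ket\psi \approx \Id \otimes A'_i \ket\psi$. Each $\tilde X_k$ is exactly unitary (since each $A_i$ is an involution) but only approximately Hermitian: $\tilde X_k - \tilde X_k^\dagger = i\{A_{3k-2}, A_{3k-1}\}$ gives $\Es{k}\|(\tilde X_k - \tilde X_k^\dagger) \otimes \Id \ket\psi\| = O(\delta)$. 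Expanding $\{\tilde X_k, \tilde Z_k\}$ and using the approximate anticommutation a constant number of times to move the $A_i$'s past each other (each swap contributing an $O(\delta)$ error on average, with $A_{3k-1}A_{3k}A_{3k-2}A_{3k-1}$ reducing to $-A_{3k-2}A_{3k}$ after three swaps so that the two terms in the anticommutator cancel), I would then show $\Es{k}\|\{\tilde X_k, \tilde Z_k\} \otimes \Id \ket\psi\| = O(\delta)$ and similarly $\Es{k\neq\ell}\|[\tilde P_k, \tilde Q_\ell] \otimes \Id \ket\psi\| = O(\delta)$ for $P, Q \in \{X, Z\}$ (four swaps over disjoint index sets). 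The cross-side correspondence $\Es{k}\|(\tilde X_k \otimes \Id - \Id \otimes \tilde X'_k)\ket\psi\| = O(\delta)$ follows from two applications of the hypothesis, and the analogues for Bob's candidates are derived from the transferred anticommutation $\Es{i\neq j}\|\Id \otimes \{A'_i, A'_j\} \ket\psi\| = O(\delta)$, which itself follows from the cross-side correspondence.

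\textbf{Rounding to observables.} Each unitary candidate $\tilde X_k$ is close to an actual observable via a ``sign of real part'' operation: define $X_k = \mathrm{sgn}(\tilde X_k + \tilde X_k^\dagger)$, breaking ties on $\ker(\tilde X_k + \tilde X_k^\dagger)$ to make $X_k$ balanced (if necessary by a final ancillary direct-sum extension of $\mH_\reg{A}$ on which $\ket\psi$ has no mass). Writing the spectral decomposition of the unitary as $\tilde X_k = \sum_j e^{i\theta_j} P_j$, so that $X_k = \sum_j \mathrm{sgn}(\cos\theta_j) P_j$, the elementary inequality $|e^{i\theta} - \mathrm{sgn}(\cos\theta)|^2 \leq 2\sin^2\theta$ (a short case analysis on whether $|\theta|$ is at most or more than $\pi/2$) gives
\[ \|(\tilde X_k - X_k) \otimes \Id \ket\psi\| \;\leq\; \tfrac{1}{\sqrt{2}}\,\|(\tilde X_k - \tilde X_k^\dagger) \otimes \Id \ket\psi\|. \]
Define $Z_k, X'_k, Z'_k$ analogously. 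Each of the approximate relations from the previous step transfers up to an additional $O(\delta)$ loss to these rounded observables, by repeated use of $\|(\tilde P_k - P_k)\otimes\Id\ket\psi\| = O(\delta)$ combined with routing the surrounding factors across the bipartite cut via the cross-side correspondence.

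\textbf{Exact anticommutation and main obstacle.} Finally, to upgrade $\{X_k, Z_k\} \otimes \Id \ket\psi \approx_\delta 0$ to the exact relation $\{X_k, Z_k\} = 0$, apply Lemma~\ref{lem:exact-ac} on $\mH_\reg{A}$ to produce a balanced $\hat Z_k$ with $\{X_k, \hat Z_k\} = 0$ and $\|(Z_k - \hat Z_k) \otimes \Id \ket\psi\| = O(\|\{X_k, Z_k\} \otimes \Id \ket\psi\|)$, which averages to $O(\delta)$; the same procedure on $\mH_\reg{B}$ produces $\hat Z'_k$. By triangle inequality all the cross-pair commutation and cross-side correspondence bounds survive the substitutions $Z_k \mapsto \hat Z_k$, $Z'_k \mapsto \hat Z'_k$, proving the lemma. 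The main obstacle throughout is the bookkeeping in the rounding step: the rounding error $\|(\tilde P_k - P_k)\otimes \Id \ket\psi\|$ is $O(\delta)$ only when evaluated on $\ket\psi$ itself, so inside longer operator products one must consistently route the remaining factors across the bipartite cut via the cross-side correspondence in order that each rounding or anticommutation bound is applied to the original state. Keeping all error averages genuinely linear (rather than square-root) in $\delta$ is what lets Lemma~\ref{lem:ac-qubits} feed into Theorem~\ref{t:EPRstabilizersfrommath} at the correct scale.
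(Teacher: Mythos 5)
Your proposal is correct and shares the paper's overall architecture: group the $A_i$ into triples, take the products $iA_{3k-2}A_{3k-1}$ and $iA_{3k-1}A_{3k}$ (with the index order reversed on Bob's side), verify every required relation by a chain of approximations in which each hypothesis term $\|\{A_i,A_j\}\otimes\Id\ket{\psi}\|$ and $\|(A_i\otimes\Id-\Id\otimes A_i')\ket{\psi}\|$ appears only a bounded number of times so that the averages stay $O(\delta)$, and finish with Lemma~\ref{lem:exact-ac} to upgrade each pair to exact anticommutation. The one step where you genuinely diverge is the Hermitization of the product candidates. The paper first applies Lemma~\ref{lem:exact-ac} to the pairs $(A_{3k-1},A_{3k-2})$ and $(A_{3k-1},A_{3k})$, replacing $A_{3k-2}$ and $A_{3k}$ by operators that \emph{exactly} anticommute with $A_{3k-1}$, so that the products $i\tilde A_{3k-2}A_{3k-1}$ and $iA_{3k-1}\tilde A_{3k}$ are automatically genuine observables; Lemma~\ref{lem:exact-ac} is then invoked a second time on each resulting pair $(X_k,\tilde Z_k)$. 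You instead keep the original $A_i$, accept that the products are unitaries that are only approximately Hermitian, and round them to observables by taking the sign of the real part in the unitary's eigenbasis; your inequality $|e^{i\theta}-\mathrm{sgn}(\cos\theta)|^2\leq 2\sin^2\theta$ is correct and yields the right linear-in-$\delta$ control via $\|(\tilde X_k-X_k)\otimes\Id\ket{\psi}\|\leq\tfrac{1}{\sqrt 2}\|(\tilde X_k-\tilde X_k^\dagger)\otimes\Id\ket{\psi}\|$. Both routes work: the paper's reuses a lemma it has already proved and gets Hermiticity for free, while yours is self-contained at the cost of one extra rounding argument and the associated bookkeeping, which you correctly identify (all error bounds hold only on $\ket{\psi}$, so interior factors must be routed across the bipartite cut before each substitution). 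The only point worth making fully explicit is that the second invocation of Lemma~\ref{lem:exact-ac} needs $X_k$ and $X'_k$ balanced on spaces of even dimension; your tie-breaking and direct-sum remark handles this, mirroring the global assumption the paper makes at the start of Section~\ref{sec:lb}.
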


\begin{proof}
For $k\in \{1,\ldots,m\}$ we construct $X_k$, $Z_k$, $X'_k$, $Z'_k$ in two stages. First, apply Lemma \ref{lem:exact-ac} independently to $(A_{3k-1}, A_{3k-2})$ and to $(A_{3k-1}, A_{3k})$ to obtain $\tilde{A}_{3k-2}$ and $\tilde{A}_{3k}$ that exactly anti-commute with $A_{3k-1}$. Next, let $X_k=i \tilde{A}_{3k-2} A_{3k-1}$ and $\tilde{Z}_k = i A_{3k-1} \tilde{A}_{3k}$. Then $X_k$, $\tilde{Z}_k$ are balanced observables and they satisfy 
\begin{align*}
\{X_k, \tilde{Z}_k\} \otimes \Id \ket{\psi} &= - \{\tilde{A}_{3k-2}, \tilde{A}_{3k}\} \otimes \Id \ket{\psi} \\
&\approx - (\tilde{A}_{3k-2} A_{3k} + \tilde{A}_{3k} A_{3k-2}) \otimes \Id \ket{\psi} \\
&\approx - \tilde{A}_{3k-2} \otimes A'_{3k} \ket{\psi} - \tilde{A}_{3k} \otimes A'_{3k-2} \ket{\psi} \\
&\approx - A_{3k-2} \otimes A'_{3k} \ket{\psi} - A_{3k} \otimes A'_{3k-2} \ket{\psi} \\
&\approx -\{A_{3k-2} , A_{3k}\} \otimes \Id \ket{\psi} \approx 0,
\end{align*}
where the total error in the chain of approximations is at most 
\begin{multline*}
2 \sqrt{3/2} \|\{A_{3k-2}, A_{3k-1}\} \otimes \Id \ket{\psi}\| + 2 \sqrt{3/2} \|\{A_{3k-1}, A_{3k}\} \otimes \Id \ket{\psi}\| \\
+ 2 \| (A_{3k-2} \otimes \Id - \Id \otimes A'_{3k-2}) \ket{\psi} \| + 2 \| (A_{3k} \otimes \Id - \Id \otimes A'_{3k}) \ket{\psi} \| + \|\{A_{3k-2}, A_{3k}\} \otimes \Id \ket{\psi}\| .
\end{multline*}
In a similar manner we can define $X'_k=i \tilde{A}'_{3k-2} A'_{3k-1}$ and $\tilde{Z}'_k = i A'_{3k-1} \tilde{A}'_{3k}$. Then $X'_k$, $\tilde{Z}'_k$ are balanced observables and we can obtain a similar bound on $\|\Id \otimes \{X'_k, \tilde{Z}'_k\} \ket{\psi}\|$. 

In the second stage, we apply Lemma \ref{lem:exact-ac} to $X_k, \tilde{Z}_k$ to obtain exactly anti-commuting $X_k, Z_k$ such that $\|(Z_k - \tilde{Z}_k) \otimes \Id \ket{\psi}\| \leq \|\{X_k, \tilde{Z}_k\} \otimes \Id \ket{\psi}\|$. Similarly, we apply Lemma \ref{lem:exact-ac} to $X'_k, \tilde{Z}'_k$ and obtain exactly anti-commuting $X'_k, Z'_k$ such that $\|(Z'_k - \tilde{Z}'_k) \otimes \Id \ket{\psi}\| \leq \|\{X'_k, \tilde{Z}'_k\} \otimes \Id \ket{\psi}\|$. It remains to show that $X_k, Z_k ,X'_k Z'_k$ satisfy the conclusions of Lemma \ref{lem:ac-qubits}. 

For each $k \neq l$ and  $P,Q \in \{X,Z\}$ we can bound $\| [P_k,Q_\ell]\otimes \Id \ket{\psi} \|$, $\| \Id\otimes [P'_k,Q'_\ell] \ket{\psi} \|$, and $\|(P_k \otimes \Id - \Id \otimes P'_k) \ket{\psi}\|$ using a similar chain of approximations to the one above. What is important here is that there is a small constant $c$ such that for all $i \neq j$, the terms $\| (A_i\otimes \Id - \Id \otimes A_i')\ket{\psi} \|$ and $\|\{A_i,A_j\} \otimes \Id \ket{\psi}\|$ appear at most $c$ times in the different error bounds that we obtain from the chains of approximation. Therefore, we can average over $k\neq l$ and use the assumptions \eqref{eq:ac-qubits-0} to obtain the conclusions of Lemma \ref{lem:ac-qubits}. 
\end{proof}

We will also need a lemma that demonstrates that if a state is close to a tensor product of a number of EPR pairs and an ancilla, then the state has high entanglement entropy. 

\begin{lemma}\label{lem:entropy}
Let $\ket{\psi}_{\reg{AA'BB'}}$ be a state in $(\C^2 \otimes \C^2)^{\otimes r}_{AB} \otimes (\mH_{\reg{A'}}\otimes \mH_{\reg{B'}})$ such that 
\begin{equation}
\|\ket{\psi}_{\reg{AA'BB'}} - \ket{\phi^+}_{\reg{A}\reg{B}}^{\otimes r} \otimes \ket{\ancilla}_{\reg{A'}\reg{B'}}\| \leq \delta/2
\end{equation}
Then, $\ket{\psi}_{\reg{AA'BB'}}$ has entanglement entropy at least \[ r - 4 \delta r + 2 \delta \log(\delta) \]
\end{lemma}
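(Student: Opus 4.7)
The plan is to combine Fannes' continuity of von Neumann entropy with strong subadditivity, exploiting the fact that $\ket\psi_{\reg{AA'BB'}}$ is a pure state on the overall system.

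First I would pass from the vector distance bound to trace distance bounds on reduced density matrices. Using the standard estimate $\|\ket{x}\bra{x} - \ket{y}\bra{y}\|_1 \leq 2\|\ket{x} - \ket{y}\|$, the hypothesis gives $\|\ket\psi\bra\psi - (\ket{\phi^+}\bra{\phi^+})^{\otimes r} \otimes \ket{\ancilla}\bra{\ancilla}\|_1 \leq \delta$. Since the trace norm is non-increasing under partial trace, the reduced state $\rho_{\reg{A}} := \Tr_{\reg{A'BB'}} \ket\psi\bra\psi$ on the EPR register $\reg{A}$ (of dimension $2^r$) satisfies $\|\rho_{\reg{A}} - \Id_{\reg{A}}/2^r\|_1 \leq \delta$, and the reduced state $\rho_{\reg{AB}} := \Tr_{\reg{A'B'}} \ket\psi\bra\psi$ on $\reg{AB}$ (of dimension $2^{2r}$) satisfies $\|\rho_{\reg{AB}} - (\ket{\phi^+}\bra{\phi^+})^{\otimes r}\|_1 \leq \delta$; notably the reference state on $\reg{AB}$ is pure.

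Next I would apply Fannes' inequality in the form $|S(\rho)-S(\sigma)| \leq \epsilon \log d + \eta(\epsilon)$, where $\eta(\epsilon) = -\epsilon\log\epsilon$. Used in dimension $2^r$ against the maximally mixed reference (entropy $r$), this yields $S(\rho_{\reg{A}}) \geq r - \delta r + \delta\log\delta$. Used in dimension $2^{2r}$ against the pure reference (entropy $0$), it yields $S(\rho_{\reg{AB}}) \leq 2\delta r - \delta\log\delta$.

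Finally I would invoke strong subadditivity on the tripartite marginal $\rho_{\reg{BAA'}}$, which gives $S(\rho_{\reg{BAA'}}) + S(\rho_{\reg{A}}) \leq S(\rho_{\reg{AB}}) + S(\rho_{\reg{AA'}})$. Purity of $\ket\psi_{\reg{AA'BB'}}$ implies $S(\rho_{\reg{BAA'}}) = S(\rho_{\reg{B'}}) \geq 0$, so rearranging yields $S(\rho_{\reg{AA'}}) \geq S(\rho_{\reg{A}}) - S(\rho_{\reg{AB}})$. Substituting the Fannes bounds gives $S(\rho_{\reg{AA'}}) \geq r - 3\delta r + 2\delta\log\delta$, which is even stronger than the claimed $r - 4\delta r + 2\delta\log\delta$. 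The main subtlety lies in selecting a consistent form of Fannes' inequality (the choice between $\log d$ and $\log(d-1)$, and between $\eta(\delta)$ and the binary entropy $h(\delta)$, shifts constants by $O(\delta)$ and may force the looser coefficient $4\delta r$ in place of $3\delta r$); however these choices affect only lower-order terms and leave the architecture of the argument (Fannes on the two marginals plus SSA plus purity) unchanged.
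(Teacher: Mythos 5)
Your proposal is correct and follows essentially the same route as the paper's proof: convert the vector bound to a trace-distance bound of $\delta$, apply Fannes' inequality to $\rho_{\reg{A}}$ (against the maximally mixed state) and to $\rho_{\reg{AB}}$ (against a pure state), and combine via strong subadditivity in the form $S(\rho_{\reg{AA'}}) \geq S(\rho_{\reg{A}}) - S(\rho_{\reg{AB}})$. The only difference is bookkeeping in the Fannes constants (the paper's intermediate bound for $S(\rho_{\reg{A}})$ carries $2\delta r$ where yours carries $\delta r$), which you correctly flag as affecting only lower-order terms and which in either case yields at least the claimed $r - 4\delta r + 2\delta\log\delta$.
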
 

\begin{proof}
We will use $\rho$ with appropriate subscripts to denote reduced density matrices of $\ket{\psi}_{\reg{AA'BB'}}$ and $\sigma$ with appropriate subscripts to denote reduced density matrices of $\ket{\phi^+}_{\reg{A}\reg{B}}^{\otimes r} \otimes \ket{\ancilla}_{\reg{A'}\reg{B'}}$. 

We will show that 
\begin{equation}\label{eq:entropyeq1}
S(\rho_{\reg{A}}) \geq r - 2 \delta r + \delta \log (\delta)
\end{equation}
and 
\begin{equation}\label{eq:entropyeq2}
S(\rho_{\reg{AB}}) \leq 2 \delta r - \delta \log(\delta) ,
\end{equation}
which using strong subadditivity as
 \[ S(\rho_{\reg{AA'}}) \geq S(\rho_{\reg{AA'B}}) + S(\rho_{\reg{A}}) - S(\rho_{\reg{AB}}) \geq S(\rho_{\reg{A}}) - S(\rho_{\reg{AB}}) \]
will prove the result. 

The trace distance between $\ket{\psi} \bra{\psi}$ and $\ket{\phi^+}_{\reg{A}\reg{B}}^{\otimes r} \otimes \ket{\ancilla}_{\reg{A'}\reg{B'}} \bra{\phi^+}_{\reg{A}\reg{B}}^{\otimes r} \otimes \bra{\ancilla}_{\reg{A'}\reg{B'}}$ is at most $\delta$. Take partial trace and get that the trace distance between $\rho_{\reg{A}}$ and $\sigma_{\reg{A}}$ is at most $\delta$. Apply Fannes inequality to get the bound \eqref{eq:entropyeq1}. Similarly, the trace distance between $\rho_{\reg{AB}}$ and $\sigma_{\reg{AB}}$ is at most $\delta$. Apply Fannes inequality again and get the bound \eqref{eq:entropyeq2}. This completes the proof of Lemma \ref{lem:entropy}. 
\end{proof}

Theorem~\ref{thm:main} follows from Lemma~\ref{lem:anti}, Lemma~\ref{lem:ac-qubits}, Lemma~\ref{lem:entropy} and Theorem~\ref{t:EPRstabilizersfrommath}.

\begin{proof}[Proof of Theorem~\ref{thm:main}]
Let $(A_i,B_{ij},\ket{\psi})$ an $\eps$-optimal strategy in $\CHSH(n)$. Applying Lemma~\ref{lem:anti} followed by Lemma~\ref{lem:ac-qubits} gives operators satisfying the assumptions of Theorem~\ref{t:EPRstabilizersfrommath}, on expectation, with $\eta=O(\sqrt{\eps})$. Applying Markov's inequality followed by Turan's theorem, for any integer $r\leq m$ there exists a set $S\subseteq \{1,\ldots,m\}$ of size $|S|=r$ such that the associated operators $(X_k,Z_k)$ and $(X'_k,Z'_k)$ for $k\in S$ satisfy the required conditions pairwise up to an error $\eta = O(\sqrt{\eps} r)$. 
Applying Corollary~\ref{t:EPRpairsfromstabilizers},  we get that the first bound in the corollary holds with error $\delta = c r^{5/2} \eps^{1/2}$. We choose $r = \Theta(\eps^{-1/5})$ so that $\delta=1/100$ (say), apply Lemma \ref{lem:entropy} and get that the entanglement entropy of $\ket{\psi}$ is $\Omega(\eps^{-1/5})$. 
\end{proof}

\bibliography{../main}

\end{document}